\newcommand{\rr}{\mathbb{R}}
\newcommand{\rd}{\rr_d}
\newtheorem{theorem}{Theorem}
\newtheorem{lemma}[theorem]{Lemma}
\newtheorem*{lemma*}{Lemma}
\newtheorem{proposition}[theorem]{Proposition}
\newtheorem{corollary}[theorem]{Corollary}
\newtheorem{remark}[theorem]{Remark}
\newtheorem*{open*}{Open~question}
\newtheorem{definition}[theorem]{Definition}
\begin{document}

\title{Lower Bounds by Birkhoff Interpolation\thanks{This work was
    supported by ANR  project CompA (project number:
    ANR-13-BS02-0001-01).
Authors' email addresses: [ignacio.garcia-marco, pascal.koiran]@ens-lyon.fr.}}

\author{Ignacio Garcia-Marco and Pascal Koiran\\
LIP\thanks{UMR 5668 ENS Lyon - CNRS - UCBL - INRIA, Universit\'e de
  Lyon}, Ecole Normale Supérieure de Lyon}

\date{July 7, 2015}

\maketitle

\begin{abstract}
In this paper we  give lower bounds for the representation of
real univariate polynomials as sums of powers of degree 1 polynomials.
We present two families of polynomials of degree $d$ such that the number of powers
that are required in such a representation must be at least of order $d$. 
This is clearly optimal up to a constant factor.
Previous lower bounds for this problem were only of order $\Omega(\sqrt{d})$, and were obtained from arguments 
based on Wronskian determinants and "shifted derivatives."
We obtain this improvement thanks to a new lower bound method based on Birkhoff interpolation 
(also known as "lacunary polynomial interpolation").
\end{abstract}

\section{Introduction}

In this paper we obtain lower bounds for the representation of
a univariate polynomial $f \in \rr[X]$ of degree $d$ under the form:
\begin{equation} \label{model}
f(x)=\sum_{i=1}^l \beta_i (x+y_i)^{e_i}
\end{equation}
where the $y_i$ are real constants and the exponents $e_i$ nonnegative
integers.

We give two families of polynomials such that the number $l$ of terms 
required in such a representation must be at least of order $d$. 
This is clearly optimal up to a constant factor.
Previous lower bounds for this problem~\cite{KKPS} were only of order $\Omega(\sqrt{d})$.
The polynomials in our first family are of the form 
$H_1(x)=\sum_{i=1}^k \alpha_i (x+x_i)^d$ with all $\alpha_i$ nonzero
and the $x_i$'s distinct.
We show that that they require at least $l \geq k$ terms whenever $k
\leq (d+2)/4$. In particular, for $k=(d+2)/4$ we obtain $l=k=(d+2)/4$ as a lower
bound.
The polynomials in our second family are of the form 
$H_2(x)=(x+1)^{d+1}-x^{d+1}$ and we show that they require more than
$(d-1)/2$ terms. This improves the lower bound for $H_1$ by a factor
of 2, but this second lower bound applies only when the exponents
$e_i$ are required to be bounded by $d$ (obviously, if larger
exponents are allowed we only need two terms to represent $H_2$).
It is easily shown that every polynomial of degree $d$ can be
represented with $\lceil(d+1)/2 \rceil$ terms.
This implies that of all polynomials of degree
$d$, $H_2$ is essentially (up to a small additive constant) the
hardest one.

Our lower bound results are specific to polynomials with real coefficients.
It would be interesting to obtain similar lower bounds for other
fields, e.g., finite fields or the field of complex numbers.
As an intermediate step toward our lower bound theorems, we obtain a result on the linear independence of polynomials which may be of independent interest.
\begin{theorem} \label{independence}
Let $f_1,\ldots,f_k \in \rr[X]$ be $k$ distinct polynomials of the form
$f_i(x)=(x+a_i)^{e_i}$. %and let $d=\max_i e_i$.
Let us denote by $n_j$ the number of polynomials of degree less than $j$ in this family.

If
$n_1 \leq 1$ and $n_j+n_{j-1}
\leq j$ for all $j$, the family $(f_i)$  is linearly independent.
\end{theorem}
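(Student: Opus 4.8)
The plan is to show that no nontrivial relation $\sum_{i=1}^{k}\beta_i f_i=0$ can hold. First one may assume all $\beta_i$ are nonzero: deleting from the family any $f_i$ with $\beta_i=0$ only decreases every $n_j$, so the hypotheses are preserved, and a relation with some zero coefficient already yields one on a smaller family. Put $D=\max_i e_i$ and let $M=(m_{i,c})_{1\le i\le k,\ 0\le c\le D}$ be the coefficient matrix, $m_{i,c}=\binom{e_i}{c}a_i^{\,e_i-c}$ (with $m_{i,c}=0$ for $c>e_i$); linear independence of $(f_i)$ is equivalent to $\operatorname{rank}M=k$, i.e. to the existence of column indices $c_1<\dots<c_k$ with $\det(m_{i,c_j})_{i,j}\neq 0$. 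Producing such a minor is the whole game. Conceptually this is a Birkhoff (``lacunary'') interpolation statement: applying the linear isomorphism $f(x)\mapsto y^{D}f(1/y)$ of $\rr[X]_{\le D}$ onto $\rr[Y]_{\le D}$ turns $f_i$ into $g_i(y)=y^{D-e_i}(1+a_iy)^{e_i}$, which has a zero of order $D-e_i$ at $0$ and a zero of order $e_i$ at $-1/a_i$, so that independence of $(f_i)$ is exactly the poisedness, over $\rr$, of the Birkhoff scheme with nodes among $\{0\}\cup\{-1/a_i: a_i\neq 0\}$ and the indicated orders.

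For the concrete construction one picks the columns so that each degree class contributes a manageable block. The polynomials of a fixed degree $d$ are $(x+a)^{d}$ for \emph{pairwise distinct} values of $a$ (distinctness within a degree is forced, since the $f_i$ are distinct), and if such a class is assigned a run of consecutive column indices inside its support $\{0,\dots,d\}$, the corresponding sub-block is, up to the nonzero factors $\binom{d}{c}$, an ordinary Vandermonde matrix in those distinct $a$'s, hence nonsingular — this is where distinctness is used, and it is why consecutive (rather than arbitrary) runs must be taken. A monomial $x^{e}$ (the case $a_i=0$) has support $\{e\}$, so it is forced onto column $e$; expanding the minor along these rows removes them and reduces matters to the non-monomials, whose supports are the full initial segments $\{0,1,\dots,e_i\}$. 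What then remains is to distribute $\{0,\dots,D\}$ among the degree classes — each class getting a consecutive run of still-available indices contained in its support — and to prove the resulting lacunary-Vandermonde minor is nonzero.

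This last point is the main obstacle, and it is exactly what the hypothesis buys. The elementary ``dimensional'' consequence $n_1\le 1,\ n_j+n_{j-1}\le j \Rightarrow n_j\le j$ for all $j$ (equivalently: at most $d+1$ of the $f_i$ have degree $\le d$) already guarantees that each class can be placed inside its support. But the full inequality $n_j+n_{j-1}\le j$ is a P\'olya-type condition on the associated interpolation matrix; it is what makes the placements of distinct classes mutually compatible and the scheme poised for \emph{every} admissible choice of the $a_i$, and it cannot be weakened: $(x+1)^3+(x-1)^3-2x^3-6x=0$ is a nontrivial relation for which $n_1\le 1$ holds but $n_4+n_3=5>4$. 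Establishing that $n_j+n_{j-1}\le j$ does force non-vanishing of the minor — equivalently, poisedness of the Birkhoff scheme over $\rr$, in the spirit of the Atkinson--Sharma theorem — is the one genuinely technical step; the small cases ($k\le 1$, or $D\le 1$) and the sub-case in which the top degree is attained by a single polynomial (split it off through its leading coefficient and induct on the remaining family, which still satisfies the hypotheses) are immediate.
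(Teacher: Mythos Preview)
Your write-up is a sketch that identifies the right landscape but does not actually contain a proof. You yourself say that ``establishing that $n_j+n_{j-1}\le j$ does force non-vanishing of the minor \ldots\ is the one genuinely technical step,'' and then you do not carry it out. Invoking Atkinson--Sharma ``in spirit'' is not the same as applying it: you have not specified which Birkhoff matrix you mean, nor checked that it satisfies the P\'olya condition and has no odd supported sequences. Your transformation $f(x)\mapsto y^{D}f(1/y)$ produces polynomials $g_i$ with prescribed \emph{zeros} at $0$ and $-1/a_i$, but linear independence of the $g_i$ is not the same thing as poisedness of an interpolation scheme at those nodes; you have not set up the duality that would make that identification precise. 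The Vandermonde-block idea is also weaker than you suggest: when several degree classes are present, the minor you build is \emph{not} block-triangular (a degree-$d$ polynomial has nonzero entries in every column $\le d$, not just in the run you assign to it), so the consecutive-run placement does not by itself make the determinant factor. Finally, the induction you mention handles only the case where the top degree is attained once; the substantive case of several top-degree polynomials is exactly the one you defer to the missing technical step.

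For comparison, the paper does supply the missing bridge. It introduces the bilinear form $\langle g,f\rangle=\sum_k f_k g_{d-k}/\binom{d}{k}$ on $\rr_d[X]$ and shows $\langle g,(x+a)^{d-k}\rangle=\dfrac{(d-k)!}{d!}\,g^{(k)}(a)$, so that independence of the $f_i$ is \emph{exactly} regularity of the Birkhoff problem $g^{(d-e_i)}(a_i)=0$. The hypotheses $n_1\le 1$, $n_j+n_{j-1}\le j$ translate into the column-count conditions $N_1\le 1$, $N_r+N_{r-1}\le r$ on that interpolation matrix; the paper then (Theorem~\ref{regular}) augments the matrix---adding a $1$ to the left of each odd sequence not starting in column one, then padding with new one-entry rows---to reach a P\'olya matrix with no odd supported sequence, and only at that point invokes Atkinson--Sharma. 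That augmentation argument is precisely the content you are missing.
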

We will see later (in Section~\ref{lb_section}, Remark~\ref{opt}) that this theorem is optimal up to a small
additive constant when $d$ is even, and exactly optimal when $d$ is odd.

\subsection*{Motivation and connection to previous work}

Lower bounds for the representation of univariate polynomials as sums 
of powers of {\em low degree} polynomials were recently obtained in~\cite{KKPS}.
We continue this line of work by focusing on
powers of {\em degree one} polynomials. This problem is still
challenging because the exponents $e_i$ may be different from
$d=\deg(f)$, and may be possibly larger than $d$. 
The lower bounds obtained in~\cite{KKPS} are of order
$\Omega(\sqrt{d})$.
We obtain $\Omega(d)$ lower bounds with %thanks to a new lower bound 
a new method
based on polynomial interpolation (more on this below).

The work in~\cite{KKPS} and in the present paper is motivated by
recent progress in arithmetic circuit complexity. It was shown
that strong enough lower bounds for circuits of depth four~\cite{AgraVinay08,Koi12,Tavenas} or
even depth three~\cite{GuptaKKS13,Tavenas}
 would yield a separation of Valiant's~\cite{Valiant79} algebraic
complexity classes VP and VNP.
Moreover, lower bounds for such circuits were obtained thanks to the
introduction by Neeraj Kayal of the method of {\em shifted partial
  derivatives}, see
e.g.~\cite{Kayal12,fournier2014lower,GKKS13,kayal2014exponential,kayal2014super,kumar2014power,kumar2014limits}.
Some of these lower bounds seem to come close to separating VP from
VNP, but there is evidence that the method of shifted derivatives
by itself will not be sufficient to achieve this goal. It is therefore
desirable to develop new lower bounds methods. 
We view the models studied in~\cite{KKPS} and in the present paper as "test beds" for the development of such methods in a fairly simple setting.
We note also that (as explained above) strong lower bounds in slightly more general models would imply a separation of VP from VNP. 
Indeed, if  the affine function $x+y_i$ in~(\ref{model}) are replaced
by a multivariate affine functions we obtain the model of "depth 3
powering arithmetic circuits.'' In general depth 3 arithmetic
circuits, instead of  powers of affine functions we have products of
(possibly distinct) affine functions. We note that the depth reduction
result of~\cite{GuptaKKS13} yields circuits where the number of
factors in such products can be  much larger than the degree of the polynomial represented by the circuit. It is therefore quite natural to allow exponents $e_i > d$ in~(\ref{model}). Likewise, the model studied in~\cite{KKPS} is close to depth 4 arithmetic circuits, see~\cite{KKPS} for details.

\subsection*{Birkhoff interpolation}

As mentioned above, our results are based on polynomial interpolation and more precisely on Birkhoff interpolation (also known as "lacunary interpolation"). The most basic form of polynomial interpolation is Lagrange interpolation. In a typical Lagrange interpolation problem, one may have to find a polynomial $g$ of degree at most 2 satisfying the 3 constraints $g(-1)=1$, $g(0)=4$, $g(1)=3$.
At a slightly higher level of generality we find Hermite
interpolation, where at each point we must interpolate not only values
of $g$ but also the values of its first few derivatives. As an
example, we may have to find a polynomial $g$ of degree 3 satisfying
the 4 constraints  $g(0)=1$, $g(1)=0, g'(1)=-1, g''(1)=2$. 
Birkhoff interpolation is even more general as there may be ``holes''
in the sequence of derivatives to be interpolated at each point.
An example of such a problem is: $g(0)=0$, $g'(1)=0$, $g(2)=g"(2)=0$.
We have set the right hand side of all constraints to 0 because the
interpolation problems that we need to handle in this paper all turn
out to be of that form (in general, one may naturally allow nonzero
values). Our interest is in the existence of a nonzero polynomial of
degree at most $d$ satisfying the constraints, and more generally in
the dimension of the solution space. In fact, we need to know whether
it has the dimension
that one would expect by naively couting the number of constraints.
%the dimension of the linear space of polynomials of degree at most $d$ satisfying th
 Contrary to Lagrange or
Hermite interpolation in one variable, where the existence of a nonzero solution can be easily decided by
comparing the number of constraints to $d+1$ (the number of
coefficients of~$g$), this is a nontrivial problem and a rich theory
was developed to address it~\cite{lorentz84}.
Results of the real (as opposed to complex) theory of Birkhoff interpolation 
 turn out to be very well suited to our lower bound problems. This is the
 reason why we work with real polynomials in this paper.

\subsection*{The Waring problem}

Any homogenous (multivariate) polynomial $f$ can be written as a sum of
powers of linear forms. In the Waring problem for polynomials one
attempts to determine the smallest possible number of powers in such a
representation. This number is called the {\em Waring rank} of $f$.
Obtaining lower bounds from results in polynomial interpolation
seems to be a new method in complexity theory, but it may not come as
a surprise to experts on the Waring problem. Indeed, a major result in
this area, the Alexander-Hirschowitz theorem (\cite{alexander95}, see~\cite{brambilla08} for a survey), is usually stated as a
result on (multivariate, Hermite) polynomial interpolation.
Classical work on the Waring problem was focused on the Waring rank of
{\em generic polynomials}, and this question was completely answered by 
 Alexander and Hirschowitz.
The focus on generic polynomials is in sharp contrast with complexity theory, where a main goal is to prove lower bounds on the complexity of {\em explicit} polynomials (or of explicit Boolean functions in Boolean complexity).
A few  recent papers~\cite{landsberg2010,carlini2012} have begun to investigate the Waring rank of specific (or explicit, in computer science parlance) polynomials such as monomials, sums of coprime monomials, the permanent and the determinant. We expect that more connections between lower bounds in algebraic complexity, polynomial interpolation and the Waring problem will be uncovered in the future.

\subsection*{Organization of the paper}

In Section~\ref{translation} we begin a study of the linear
independence of polynomials of the form $(x+y_i)^{e_i}$. We show that
this problem can be translated into a problem of Birkhoff
interpolation, and in fact we show that Birkhoff interpolation and
linear independence are dual problems. In Section~\ref{matrices} we
present the notions and results on Birkhoff interpolation that are
needed for this paper, and we use them to prove
Theorem~\ref{independence}. We build on this result to prove our lower
bound results in Section~\ref{lb_section}, and we discuss their
optimality. 
The lower bound problem studied in this paper is over the field of real numbers.
In Section~\ref{fields} we briefly discuss the situation in other fields and in particular the field of complex numbers.
Finally, we give an illustration of our methods in the appendix by completely working out a small example.

\section{From linear independence to polynomial interpolation} \label{translation}

There is a clear connection between lower bounds for representations
of polynomials under form~(\ref{model}) and linear independence.
% of polynomials. 
Indeed, proving a lower bound for a polynomial  $f$ amounts to showing that $f$ is linearly independent from $(x+y_1)^{e_1},\ldots,(x+y_l)^{e_l}$ for some $l$ and for any sequence of $l$ pairs  $(y_1,e_1),\ldots,(y_l,e_l)$. Moreover, if the "hard polynomial" $f$ is itself presented as a sum of powers of degree 1 polynomials (which is the case in this paper), we can obtain a lower bound for $f$ from linear independence results for such powers. This motivates the following study.

Let us denote by $\rd[X]$ the linear subspace of $\rr[X]$ made of
polynomials of degree at most $d$, and by $g^{(k)}$ th $k$-th order
derivative of a polynomial $g$.

\begin{proposition} \label{dual}
Let $f_1,\ldots,f_k \in \rd[X]$ be $k$ distinct polynomials of the form
$f_i(x)=(x+a_i)^{e_i}$.
The family $(f_i)_{1 \leq i \leq k}$ is linearly independent if and
only if
$$\dim \{g \in \rd[X];\ g^{(d-e_i)}(a_i) =0 \text{ \rm for all } i\} = d+1-k.$$
\end{proposition}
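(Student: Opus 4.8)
The plan is to exploit duality between the space $\rd[X]$ and its dual via the standard pairing coming from evaluation of (scaled) derivatives at points. The key observation is that for a polynomial $g$ of degree at most $d$, the quantity $g^{(d-e)}(a)$ is, up to a nonzero scalar factor, the coefficient of a suitable basis vector when one expands $g$ appropriately; more precisely, consider the bilinear form $B$ on $\rd[X]$ defined by $B(g,h) = \sum_{j} g_j h_j / j!$ or, more usefully here, note that evaluating $g^{(d-e_i)}$ at $a_i$ is a linear functional $\lambda_i$ on $\rd[X]$. First I would make precise the claim that, under the natural identification of $\rr[X]_d$ with its own dual (sending the monomial $X^m$ to the functional $g \mapsto g^{(m)}(0)/m!$, or any convenient normalization), the functional $\lambda_i : g \mapsto g^{(d-e_i)}(a_i)$ corresponds to the vector $f_i(x) = (x+a_i)^{e_i}$ itself. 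This is just the Taylor expansion of $g$ around $-a_i$: writing $g(x) = \sum_m \frac{g^{(m)}(-a_i)}{m!}(x+a_i)^m$ is not quite what we want; rather one checks directly that $\langle f_i, g\rangle = c_i\, g^{(d - e_i)}(a_i)$ for an explicit nonzero constant $c_i$, using that $f_i$ has degree $e_i \le d$ so it "picks out" the coefficient that corresponds to the $(d-e_i)$-th derivative at $a_i$ under the pairing.

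Once this dictionary is set up, the proposition becomes a triviality of linear algebra: the solution space $\{g \in \rd[X] : g^{(d-e_i)}(a_i) = 0 \text{ for all } i\}$ is exactly the common kernel $\bigcap_i \ker \lambda_i$, which is the annihilator (orthogonal complement) of the span of $\{\lambda_1,\ldots,\lambda_k\}$, equivalently of $\mathrm{span}(f_1,\ldots,f_k)$ under the identification. Hence its dimension is $(d+1) - \dim\mathrm{span}(f_1,\ldots,f_k)$. The family $(f_i)$ is linearly independent precisely when $\dim\mathrm{span}(f_1,\ldots,f_k) = k$, which happens precisely when the solution space has dimension $d+1-k$. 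Since $\dim\mathrm{span}(f_i) \le k$ always, the solution space always has dimension at least $d+1-k$, and equality of the two conditions is immediate. I would phrase this either abstractly via annihilators, or concretely: form the $k \times (d+1)$ matrix $M$ whose $i$-th row lists the coefficients of $f_i$ in the monomial basis; then $(f_i)$ is linearly independent iff $\mathrm{rank}(M) = k$, while the solution space is the kernel of a matrix $M'$ obtained from $M$ by the invertible column rescaling $X^j \mapsto j!/(d-j)! \cdot X^j$ (or similar), so $M$ and $M'$ have the same rank and the rank--nullity theorem gives $\dim\ker M' = d+1 - \mathrm{rank}(M)$.

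The only genuine point requiring care — the "main obstacle," though it is a mild one — is pinning down the exact correspondence between the linear functional $g \mapsto g^{(d-e_i)}(a_i)$ and the polynomial $(x+a_i)^{e_i}$, i.e., verifying that the scalars $c_i$ above are nonzero and that the column rescaling relating $M$ to $M'$ is indeed invertible (it is, since $d - j$ ranges over $\{0,\ldots,d\}$ so all factorials involved are finite and positive, and one must only double-check there is no degenerate index). Concretely, expanding $f_i(x) = (x+a_i)^{e_i} = \sum_{j=0}^{e_i}\binom{e_i}{j}a_i^{e_i-j}X^j$ and comparing with $g^{(d-e_i)}(a_i) = \sum_{m \ge d-e_i}\frac{m!}{(m-d+e_i)!}g_m\, a_i^{m-d+e_i}$ (where $g = \sum g_m X^m$), one reindexes $m = d - e_i + j$ and reads off that the pairing $\sum_j g_{d-e_i+j}\cdot(\text{positive scalar})\cdot a_i^{j}$ matches the coefficient list of $f_i$ up to the claimed diagonal rescaling; in particular degree considerations ($e_i \le d$) guarantee the relevant coefficients of $g$ exist and the scalars are nonzero. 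After this bookkeeping the proof is complete. I expect the whole argument to take under a page.
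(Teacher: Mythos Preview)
Your approach is essentially the paper's: set up a nondegenerate bilinear form on $\rd[X]$ under which the functional $g\mapsto g^{(d-e_i)}(a_i)$ is (a nonzero scalar multiple of) pairing against $f_i=(x+a_i)^{e_i}$, and then invoke $\dim V^{\perp}=d+1-\dim V$. The paper makes this explicit with $\langle g,f\rangle=\sum_k f_k g_{d-k}/\binom{d}{k}$ and verifies the key identity $g^{(k)}(a)=\frac{d!}{(d-k)!}\langle g,(x+a)^{d-k}\rangle$ by a clean generating-function trick, expanding $\langle g,(x+a+\epsilon)^d\rangle$ in powers of $\epsilon$.

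One small bookkeeping slip to watch when you write it up: the passage from the coefficient matrix $M$ of the $f_i$ to the system matrix $M'$ of the constraints $g^{(d-e_i)}(a_i)=0$ is not a pure column rescaling. The coefficient of $g_m$ in $g^{(d-e_i)}(a_i)$ carries $a_i^{m-d+e_i}$, whereas the coefficient of $x^j$ in $f_i$ carries $a_i^{e_i-j}$; matching these forces the substitution $j=d-m$, i.e.\ a column reversal, and the remaining scalar $\tfrac{m!(d-m)!}{e_i!}$ depends on the row as well. So $M'=D_{\mathrm{row}}\, M\, P\, D_{\mathrm{col}}$ for a permutation $P$ and nonsingular diagonal matrices, which still gives $\operatorname{rank} M'=\operatorname{rank} M$ and your conclusion stands.
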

Let $V$ be the subspace of $\rd[X]$ spanned by the $f_i$. The
orthogonal $V^{\perp}$ of $V$ is the space of linear forms $\phi \in
\rd[X]^*$ such that $\langle \phi,f \rangle = 0$ for all $f \in V$.
We will use the fact that $\dim V^{\perp}  = d+1 - \dim V$. We will
identify $\rd[X]$ with its dual $\rd[X]^*$ via the symmetric
bilinear form
$$\langle g,f \rangle = \sum_{k=0}^d \frac{f_k g_{d-k}}{{d \choose k}}.$$
This is reminiscent of Weyl's unitarily invariant inner
product (see e.g. chapter 16 of~\cite{burgisser2013} for a recent exposition)
but we provide here a self-contained treatment.
Poposition~\ref{dual} follows immediately from the next lemma:

\begin{lemma}
The orthogonal $f_i^{\perp}$ of $f_i$ is equal to $
\{g \in \rd[X];\ g^{(d-e_i)}(a_i) =0\}$.
\end{lemma}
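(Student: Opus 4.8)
The plan is to compute $\langle g, f_i\rangle$ explicitly and to recognize it, up to a nonzero scalar, as the value $g^{(d-e_i)}(a_i)$; the equality of the two subspaces then follows at once.

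First I would expand $f_i(x)=(x+a_i)^{e_i}$ by the binomial theorem. Writing $f_i=\sum_{k=0}^d f_{i,k}x^k$, we have $f_{i,k}=\binom{e_i}{k}a_i^{e_i-k}$ for $0\le k\le e_i$ and $f_{i,k}=0$ for $e_i<k\le d$; note that this uses $e_i\le d$, which holds since $f_i\in\rd[X]$. Substituting into the definition of the bilinear form gives
\[
\langle g, f_i\rangle=\sum_{k=0}^{e_i}\frac{f_{i,k}\,g_{d-k}}{\binom{d}{k}}=\sum_{k=0}^{e_i}\frac{\binom{e_i}{k}}{\binom{d}{k}}\,a_i^{e_i-k}\,g_{d-k}.
\]
Next I would write down the Taylor expansion of the derivative. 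Differentiating $g=\sum_j g_j x^j$ exactly $d-e_i$ times and evaluating at $a_i$ yields $g^{(d-e_i)}(a_i)=\sum_{j\ge d-e_i}\frac{j!}{(j-d+e_i)!}\,g_j\,a_i^{\,j-d+e_i}$; since $\deg g\le d$, the index substitution $j=d-k$ rewrites this as $\sum_{k=0}^{e_i}\frac{(d-k)!}{(e_i-k)!}\,g_{d-k}\,a_i^{e_i-k}$.

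The key step is then the elementary identity $\binom{e_i}{k}/\binom{d}{k}=\frac{e_i!\,(d-k)!}{d!\,(e_i-k)!}$, valid for $0\le k\le e_i$. Plugging it into the formula for $\langle g,f_i\rangle$ shows term by term that $\langle g,f_i\rangle=\frac{e_i!}{d!}\,g^{(d-e_i)}(a_i)$. Since the scalar $e_i!/d!$ is nonzero, the orthogonal $f_i^{\perp}=\{g\in\rd[X]:\langle g,f_i\rangle=0\}$ coincides with $\{g\in\rd[X]:g^{(d-e_i)}(a_i)=0\}$, which is the assertion of the lemma.

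I do not expect any genuine obstacle here: the argument is a direct computation. The only points requiring a little care are the bookkeeping with the factorials in the binomial identity and the observation that both sums truncate at $k=e_i$ for the right reasons — the expansion of $f_i$ stops there because $e_i\le d$, and the derivative expansion contributes only for $j\ge d-e_i$, i.e. $k\le e_i$ — so that the two expressions can be matched monomial by monomial in $a_i$.
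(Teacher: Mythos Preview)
Your proof is correct. The computation is sound: the binomial expansion of $f_i$, the derivative formula, and the binomial-coefficient identity $\binom{e_i}{k}/\binom{d}{k}=e_i!(d-k)!/\bigl(d!(e_i-k)!\bigr)$ combine exactly as you say to give $\langle g,f_i\rangle=\frac{e_i!}{d!}\,g^{(d-e_i)}(a_i)$, which is the identity the paper also obtains (in the notation $k=d-e_i$ it reads $g^{(k)}(a)=\frac{d!}{(d-k)!}\langle g,(x+a)^{d-k}\rangle$).

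The route, however, differs from the paper's. You match the two expressions term by term via explicit factorial bookkeeping. The paper instead first checks the special case $e_i=d$, where $\langle g,(x+a)^d\rangle=g(a)$ is immediate, and then introduces an auxiliary variable $\epsilon$: expanding $\langle g,(x+a+\epsilon)^d\rangle$ once by Taylor's formula as $\sum_k g^{(k)}(a)\epsilon^k/k!$ and once by the binomial theorem as $\sum_k\binom{d}{k}\langle g,(x+a)^{d-k}\rangle\epsilon^k$, the desired identity drops out by comparing coefficients of $\epsilon^k$. Your approach is more hands-on and self-contained; the paper's generating-function trick avoids the factorial identity altogether and makes the appearance of the derivative transparent as a Taylor coefficient. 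Both are short and elementary, and neither has any real advantage beyond taste.
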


\begin{proof}
We begin with the case $e_i=d$. We need to show that for a polynomial
$f(x)=(x+a)^d$, $\langle g,f \rangle =0$ iff $g(a)=0$.
This follows from the definition of $\langle g,f \rangle$ since by
expanding $(x+a)^d$ in powers of $x$ we have
\begin{equation} \label{lagrange}
\langle g,(x+a)^d \rangle = \sum_{k=0}^d g_{d-k}a^{d-k}=g(a).
\end{equation}

Consider now the general case $f(x)=(x+a)^{d-k}$ where $k \geq 0$.
We will show that
\begin{equation} \label{birkhoff}
g^{(k)}(a)= \frac{d!}{(d-k)!} \langle g,f \rangle,
\end{equation}
thereby completing the proof of the lemma.
In order to obtain~(\ref{birkhoff}) from~(\ref{lagrange})
we introduce a new variable $\epsilon$ and expand in two different ways
$\langle g,(x+a+\epsilon)^d \rangle$ in powers of $\epsilon$.
From~(\ref{lagrange}) we have
\begin{equation} \label{expansion1}
\langle g,(x+a+\epsilon)^d \rangle = g(a+\epsilon) =
\sum_{k=0}^d \frac{g^{(k)}(a)}{k!} \epsilon^k.
\end{equation}
On the other hand, since
$(x+a+\epsilon)^d =
\sum_{k=0}^d {d \choose k}\epsilon^k (x+a)^{d-k}$
we have from bilinearity
\begin{equation} \label{expansion2}
\langle g,(x+a+\epsilon)^d \rangle = \sum_{k=0}^d {d \choose k}
\langle g,(x+a)^{d-k} \rangle \epsilon^k.
\end{equation}
Comparing~(\ref{expansion1}) and~(\ref{expansion2}) shows that
$\frac{g^{(k)}(a)}{k!} = {d \choose k}
\langle g,(x+a)^{d-k} \rangle$.
\end{proof}

Since $\rd[X]$ has dimension $d+1$ we must have $k \leq d+1$ for the
$f_i$ to be linearly independent. More generally, let $n_j$ be the
number of $f_i$'s which are of degree less than $j$.
Again, for the $f_i$ to be linearly independent we must have $n_j \leq
j$ for all $j=1,\ldots,d+1$.
The polynomial
identity $(x+1)^2-(x-1)^2-4x=0$ shows that the converse is not true,
but Theorem~\ref{independence} from the introduction shows that a weak converse holds true.
We will use Proposition~\ref{dual} to prove this theorem at the end of the next section.
%we will prove a weak converse of this observation:
%\begin{theorem} \label{independence}
%The $f_i$ are linearly independent if $n_1 \leq 1$ and $n_j+n_{j-1}
%\leq j$ for all $j=2,\ldots,d+1$.
%\end{theorem}
%We show in Remark~\ref{opt} that this theorem is optimal up to a small
%additive constant when $d$ is even, and exactly optimal when $d$ is odd.

\section{Interpolation matrices} \label{matrices}

In Birkhoff interpolation we look for a polynomial $g \in \rd[X]$
satisfying a system of linear equations of the form
\begin{equation} \label{birkint}
g^{(k)}(x_i)=c_{i,k}.
\end{equation}
The system may be lacunary, i.e., we may not have an equation in the
system for every value of $i$ and $k$. We set $e_{i,k}=1$ if such an
equation appears, and $e_{i,k}=0$ otherwise. We arrange this
combinatorial data in an {\em interpolation matrix}
$E=(e_{i,k})_{1 \leq i \leq m, 0 \leq k \leq n}$. We assume that the
{\em knots} $x_1,\ldots,x_m$ are distinct.
It is usually assumed~\cite{lorentz84}
that $|E|=\sum_{i,k} e_{i,k}$, the number of 1's in $E$, is equal to
$d+1$ (the number of coefficients of $g$). Here we will only assume
that $|E| \leq d+1$. We can also assume that $n \leq d$ since $g^{(k)}=0$ for
$k>d$. In the sequel we will in fact assume that $n=d$: this condition
can be enforced by adding empty columns to $E$ if necessary.

Let $X=\{x_1,\ldots,x_m\}$ be the set of knots. When $|E|=d+1$, the
pair $(E,X)$ is said to be {\em regular} if~(\ref{birkint}) has a
unique solution for any choice of the $c_{i,k}$.
Finding necessary or sufficient conditions for regularity
has been a major topic in Birkhoff interpolation~\cite{lorentz84}.
For $|E| \leq d+1$, we may expect~(\ref{birkint}) to have a set of
solutions of dimension $d+1-|E|$. We therefore extend the definition
of regularity to this case as follows.
\begin{definition}
The pair $(E,X)$ is regular if for any choice of the $c_{i,k}$
the set of solutions
of~(\ref{birkint}) is an affine subspace of dimension $d+1-|E|$.
\end{definition}
Note in particular that the set of solutions is nonempty
since $|E| \leq d+1$.

Basic linear algebra provides a link between regularity
for different values of $|E|$.
\begin{proposition} \label{subset}
Let $E$ be an $m \times (d+1)$ interpolation matrix. For an
interpolation matrix $F$ of the same format, we write $F \subseteq E$
if $e_{i,k}=0$ implies $f_{i,k}=0$ (i.e., the set of 1's of $F$ is
included in the set of 1's of $E$).

If the pair $(E,X)$ is regular and $F \subseteq E$ then $(F,X)$ is
regular as well.
\end{proposition}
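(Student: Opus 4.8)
The plan is to recast regularity as the surjectivity of a single linear map, after which the proposition reduces to the trivial fact that a coordinate projection composed with a surjection is again a surjection.

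First I would attach to any interpolation matrix $E$ (with $|E| \le d+1$ and knot set $X$) the evaluation map $L_E \colon \rd[X] \to \rr^{|E|}$ sending $g$ to the family of values $g^{(k)}(x_i)$ over the positions $(i,k)$ with $e_{i,k}=1$. This map is linear, and for a right-hand side $c=(c_{i,k})$ the solution set of~(\ref{birkint}) is exactly the fibre $L_E^{-1}(c)$. The conceptual heart of the argument is then the equivalence: $(E,X)$ is regular if and only if $L_E$ is surjective. Indeed, if $L_E$ is surjective every fibre is nonempty and is an affine subspace of dimension $\dim \ker L_E = (d+1)-\operatorname{rank}(L_E) = (d+1)-|E|$; conversely, regularity forces each fibre to be a nonempty affine subspace of dimension $d+1-|E|$, whence $\operatorname{rank}(L_E)=|E|$, i.e. $L_E$ is surjective. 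I would write this out carefully, but it is just rank–nullity.

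With this in hand the proposition is immediate. Since $n=d$ is assumed, $E$ and $F$ have the same format, and $F \subseteq E$ means the $1$'s of $F$ form a subset of the $1$'s of $E$; discarding the coordinates of $\rr^{|E|}$ indexed by positions lying in $E$ but not in $F$ therefore gives a surjective coordinate projection $\pi \colon \rr^{|E|} \to \rr^{|F|}$ with $L_F = \pi \circ L_E$. If $(E,X)$ is regular then $L_E$ is surjective by the equivalence above, so $L_F$, a composition of two surjections, is surjective, and hence $(F,X)$ is regular. I do not expect a genuine obstacle here; the only points requiring a little care are the bookkeeping in the equivalence "regular $\iff$ $L_E$ surjective" — in particular that an affine subspace of the stated nonnegative dimension is automatically nonempty, which is what licenses the passage from regularity back to surjectivity — and checking that the projection $\pi$ is well defined, which is exactly where the normalization $n=d$ is used.
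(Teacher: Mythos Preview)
Your argument is correct and is essentially the paper's own second (``succinct'') proof: your map $L_E$ is the linear map with matrix $A(E,X)$, surjectivity is full row rank, and your factorization $L_F=\pi\circ L_E$ is exactly the observation that the rows of $A(F,X)$ form a subset of the rows of $A(E,X)$. The paper also records a first, dimension-counting variant on the affine fibres, but the content is the same.
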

\begin{proof}
Consider the interpolation problem:
$$g^{(k)}(x_i)=c_{i,k} \text{ for } f_{i,k}=1.$$
The set of solutions ${\cal F} \subseteq \rd[X]$ is an affine subspace
which is either empty or of dimension at least $d+1-|F|$.
It cannot be empty since by adding $|E|-|F|$ constraints we can obtain
an interpolation problem with a solution space of dimension
$d+1-|E| \geq 0$. For the same reason, it cannot be of dimension
$d+2-|F|$ or more. In this case, by adding $|E|-|F|$ constraints we
would obtain an interpolation problem with a solution space of dimension
at least $(d+2-|F|)-(|E|-|F|) = d+2-|E|$. This is impossible since
$(E,X)$ is regular.
\end{proof}
Another somewhat more succint way of phrasing the above proof is to
consider the matrix of the linear system defining the affine
subset~$\cal F$. Anticipating on Section~\ref{lb_section}, let us
denote this matrix by $A(E,X)$. The pair $(E,X)$ is regular
iff $A(E,X)$ if of full row rank. The rows of $A(F,X)$ are also rows
of $A(E,X)$, so $A(F,X)$ must be of full row rank if $A(E,X)$ is.

For an interpolation matrix,
the notions of {\em regularity} and {\em order regularity} are
classicaly defined~\cite{lorentz84} in the case  $|E| = d+1$, but the extension to the
general case $|E| \leq d+1$ is straightforward:
\begin{definition}
The interpolation matrix $E$ is regular if $(E,X)$ is regular for
any choice of $m$ knots $x_1,\ldots,x_m$. It is order regular if
$(E,X)$ is regular for any choice of $m$ ordered knots $x_1 < x_2
\ldots < x_m$.
\end{definition}
As an immediate corollary of Proposition~\ref{subset} we have:
\begin{corollary} \label{subsetcor}
Let $E,F$ be two interpolation matrices with $F \subseteq E$.
If $E$ is regular (respectively, order regular) then $F$ is
also regular (respectively, order regular).
\end{corollary}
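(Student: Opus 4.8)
The plan is to derive the corollary directly from Proposition~\ref{subset} by unfolding the definitions of \emph{regular} and \emph{order regular}: both of these are statements quantified over admissible choices of knots, so it suffices to apply Proposition~\ref{subset} once for each such choice.

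First I would handle the regular case. Assume $E$ is regular and fix an arbitrary set $X = \{x_1,\dots,x_m\}$ of $m$ distinct knots. By definition of regularity of the matrix $E$, the pair $(E,X)$ is regular; since $F \subseteq E$ by hypothesis, Proposition~\ref{subset} applies and gives that $(F,X)$ is regular. As $X$ was an arbitrary set of $m$ distinct knots, this says precisely that $F$ is regular. The order-regular case is verbatim the same argument, with $X$ now ranging only over ordered tuples $x_1 < x_2 < \cdots < x_m$: order regularity of $E$ makes each such $(E,X)$ regular, Proposition~\ref{subset} transports this to $(F,X)$, and hence $F$ is order regular.

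I do not anticipate any genuine obstacle — this is exactly why the statement is announced as an immediate corollary. The only bookkeeping point to keep straight is that $E$ and $F$ are assumed to share the format $m \times (d+1)$ (the standing convention here, enforced by padding with empty columns so that $n = d$), so that the relation "$F \subseteq E$'' is well defined and Proposition~\ref{subset} applies as stated; and since $F \subseteq E$ forces $|F| \le |E| \le d+1$, the extended notion of regularity is meaningful for $F$ as well.
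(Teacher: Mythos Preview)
Your proposal is correct and matches the paper's approach: the paper states the result as an immediate corollary of Proposition~\ref{subset} without further argument, and your reasoning---fixing an arbitrary (respectively, ordered) set of knots $X$ and applying Proposition~\ref{subset} to the pair $(E,X)$---is precisely the intended one-line deduction.
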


We will give in Theorem~\ref{order} a sufficient condition for order
regularity, but we first need some additional definitions.
Say that an interpolation matrix $E$ satisfies the {\em upper
  P\'olya condition} if for $r=1,\ldots,d+1$ there are at most~$r$ 1's
in the last $r$ columns of $E$.
If $|E|=d+1$ this is equivalent to the {\em P\'olya condition:}
 there are at least $r$ 1's
in the first $r$ columns of $E$ for $r=1,\ldots,d+1$.

Consider a row of an interpolation matrix $E$. By {\em sequence} we
mean a maximal sequence of consecutive 1's in this row. A sequence
containing an odd number of 1's is naturally called an {\em odd sequence}.
A sequence of the $i$th row is {\em supported} if there are 1's in $E$
both to the northwest and southwest of the first element of the row.
More precisely, if $(i,k)$ is the position of the first 1 of the
sequence, $E$ should contain 1's in positions $(i_1,k_1)$ and
$(i_2,k_2)$ where $i_1 < i < i_2$, $k_1 < k$ and $k_2 < k$.
The following important result (Theorem~1.5 in~\cite{lorentz84}) is
due to Atkinson and Sharma~\cite{atkinson69}.
\begin{theorem} \label{order}
Let $E$ be an $m \times (d+1)$ interpolation matrix with $|E|=d+1$.
If $E$ satisfies the P\'olya condition and contains no odd supported
sequence then $E$ is order regular.
\end{theorem}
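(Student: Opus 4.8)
The plan is to prove order regularity by the classical zero-counting method built on Rolle's theorem. Since $|E| = d+1 = \dim\rd[X]$, for any fixed ordered knots $x_1 < \cdots < x_m$ the map $g \mapsto \bigl(g^{(k)}(x_i)\bigr)_{e_{i,k}=1}$ is linear between two spaces of the same dimension, so $(E,X)$ is regular if and only if this map is injective. It therefore suffices to show that if $g$ has degree at most $d$ and $g^{(k)}(x_i)=0$ for all $(i,k)$ with $e_{i,k}=1$, then $g\equiv 0$; I would argue by contradiction, assuming $g\not\equiv 0$.

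The two basic tools are: (a) a \emph{sequence} of length $p$ in row $i$, occupying columns $k,\ldots,k+p-1$, forces $x_i$ to be a zero of $g^{(k)}$ of multiplicity at least $p$, hence a zero of $g^{(k+t)}$ of multiplicity at least $p-t$ for $0\le t\le p$; and (b) the generalized Rolle theorem: if a polynomial $Q$ has $N$ zeros counted with multiplicity in a closed interval, then $Q'$ has at least $N-1$ of them there. Using these I would run a downward induction on the column index, maintaining a lower bound on the number of zeros (with multiplicity) of $g^{(j)}$ that are forced by the $1$'s in columns $0,\ldots,j$: a differentiation step costs at most one zero by Rolle, while the $1$'s appearing in the next column feed zeros back in. The P\'olya condition --- at least $r$ ones among the first $r$ columns --- guarantees that enough $1$'s have arrived by each stage to keep the running count positive through column $d$. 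Granting that this count goes through, $g^{(d)}$, a nonzero constant, would be forced to vanish, which is absurd; the case $\deg g<d$ is ruled out by the same count applied to the first $\deg g+1$ columns.

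The catch --- and the precise reason the ``no odd supported sequence'' hypothesis is needed --- is that making this count go through is subtle, because the naive recursion ``(count at column $j{+}1$) $\ge$ (count at column $j$) $-1+(\text{total length of sequences starting at column }j{+}1)$'' is \emph{false}. A newly prescribed zero of $g^{(j+1)}$ at the start of a sequence may simply coincide with the lone Rolle zero already counted in the gap between two previously forced zeros of $g^{(j)}$, and then it contributes nothing new. The textbook non-regular example $g(-1)=g'(0)=g(1)=0$ shows this exactly: the Rolle zero of $g'$ produced by the zeros $\pm1$ of $g=x^2-1$ sits at $0$, where it swallows the condition $g'(0)=0$; and there the single $1$ in the middle row is a \emph{supported}, \emph{odd} sequence. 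This is the general pattern: an unsupported sequence has its knot at or beyond the edge of the block of zeros already forced one level below, where the Rolle loss does no harm, whereas a supported sequence has its knot strictly inside that block and is at risk of being absorbed --- but when the sequence has even length, a sign-change analysis of $g^{(j+1)}$ near that knot shows the Rolle zero cannot be spent there, so the forced zero survives. Only odd supported sequences slip through.

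The hard part of a complete proof is therefore turning this picture into a rigorous count: choosing the interval in which zeros are counted at each level so that the bounds chain together, carrying a parity invariant alongside the multiplicity count, decomposing the $1$'s of $E$ into sequences, and checking gap by gap that no Rolle point is wasted except possibly at an odd supported sequence. I would expect this case analysis --- in essence the original argument of Atkinson and Sharma --- to be by far the most technical ingredient.
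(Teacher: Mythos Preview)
The paper does not prove this theorem: it is quoted as Theorem~1.5 of \cite{lorentz84}, due to Atkinson and Sharma~\cite{atkinson69}, and used as a black box. So there is no ``paper's own proof'' to compare against.

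As for your sketch itself: the strategy you describe --- reduce to injectivity, count zeros of the successive derivatives $g, g', \ldots, g^{(d)}$ via Rolle's theorem, and use the P\'olya condition to show the count stays positive through the last column --- is indeed the classical route, and your diagnosis of where the difficulty lies is accurate. The naive recursion can overcount precisely when a newly prescribed zero coincides with a Rolle zero already present, and the role of ``no odd supported sequence'' is exactly to rule out the bad case. You also correctly identify the $(x+1)^2-(x-1)^2-4x$ example as the prototype of what goes wrong.

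That said, what you have written is a plan, not a proof. You yourself flag the missing piece: the actual bookkeeping that tracks, level by level, which zeros are ``new'' and which are absorbed, together with the parity/sign argument showing that an even supported sequence cannot be swallowed by a Rolle zero. This is the entire content of the Atkinson--Sharma argument, and you have not carried it out --- you have (accurately) described its shape and then deferred to the original. If the goal were to supply a self-contained proof, the proposal as it stands would be incomplete; if the goal is simply to explain why the theorem is plausible and where the hypotheses enter, your account does that well.
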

As an example, the interpolation problem corresponding to the
polynomial identity $(x+1)^2-(x-1)^2-4x=0$
is:
$$g(-1)=0,\ g'(0)=0,\ g(1)=0$$
where $g \in {\rr}_2[X]$. It admits $g(x)=x^2-1$ as a nontrivial solution.
The corresponding interpolation matrix
 $$\left(\begin{array}{ccc}
1 & 0 & 0\\
0 & 1 & 0\\
1 & 0 & 0
\end{array}\right)$$
satisfies the P\'olya condition but contains an odd supported sequence in its second row.
%Note that a sequence which begins in the first column of $E$, or
%belongs to the first row or the last row of $E$ cannot be supported.
%As a result we have:
\begin{corollary} \label{ordercor}
Let $E$ be an $m \times (d+1)$ interpolation matrix with $|E|=d+1$.
If $E$ satisfies the P\'olya condition, then:
\begin{itemize}
\item[(i)] if every odd
sequence of $E$ belongs to the first row, to the last row, or begins in
the first column then $E$ is order regular.
\item[(ii)] if every odd
sequence of $E$ begins in
the first column then $E$ is regular.
\end{itemize}
\end{corollary}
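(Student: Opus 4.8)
The plan is to derive both parts from Theorem~\ref{order} by reducing to the case of a matrix with no odd supported sequence, using the fact that the offending odd sequences in our hypotheses cannot be \emph{supported}. Recall that a sequence starting at position $(i,k)$ is supported only if there are 1's strictly to the northwest \emph{and} strictly to the southwest of $(i,k)$; in particular a sequence in the first row has nothing above it, a sequence in the last row has nothing below it, and a sequence beginning in the first column ($k=0$) has nothing to its left in any row, hence nothing strictly to the northwest or southwest. So in all three cases such a sequence is automatically not supported.

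For part~(i): assume $E$ satisfies the P\'olya condition and every odd sequence of $E$ lies in the first row, lies in the last row, or begins in the first column. By the observation above, none of these odd sequences is supported, so $E$ contains no odd supported sequence. Since also $|E|=d+1$, Theorem~\ref{order} applies directly and $E$ is order regular.

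For part~(ii): now every odd sequence of $E$ begins in the first column. As just noted, a sequence beginning in column $0$ is never supported, so again $E$ has no odd supported sequence, and Theorem~\ref{order} gives that $E$ is order regular, i.e.\ $(E,X)$ is regular for every choice of \emph{ordered} knots $x_1<\cdots<x_m$. To upgrade this to full regularity we must remove the ordering hypothesis. The key point is that the structural conditions in play — the P\'olya condition, and "every odd sequence begins in the first column" — depend only on the matrix $E$ and are invariant under permuting the rows of $E$; and permuting the rows of $E$ corresponds exactly to permuting the knots. Concretely, given an arbitrary tuple of distinct knots $x_1,\dots,x_m$, let $\sigma$ be the permutation sorting them into increasing order and let $E'$ be $E$ with its rows permuted by $\sigma$. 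Then $(E',X)$ with the sorted knots is a regular pair by part~(ii)'s order-regularity conclusion (since $E'$ still satisfies the P\'olya condition and still has all odd sequences beginning in the first column), and $A(E,X)$ is obtained from $A(E',X_{\mathrm{sorted}})$ by a row permutation, so it has the same rank. Hence $A(E,X)$ has full row rank for every choice of knots, which is exactly regularity of $E$.

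The only mild subtlety — the main thing to get right rather than a genuine obstacle — is checking that the hypotheses really are row-permutation invariant: the P\'olya condition counts 1's column-by-column and is obviously unaffected by reordering rows, and "odd sequences begin in the first column" is a condition checked independently in each row, hence also unaffected. (Note this is precisely why part~(i) cannot be upgraded: "belongs to the first row or last row" is \emph{not} permutation invariant.) Everything else is a direct appeal to Theorem~\ref{order} and to the rank characterization of regularity recorded after Proposition~\ref{subset}.
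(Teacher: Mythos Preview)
Your proof is correct and follows essentially the same approach as the paper: part~(i) is immediate from the observation that sequences in the first or last row, or beginning in the first column, cannot be supported; part~(ii) is obtained by permuting rows and knots simultaneously and appealing to order regularity of the permuted matrix. The paper invokes part~(i) for the permuted matrix $E'$ rather than re-deriving order regularity from the permutation-invariance of the hypotheses, but this is a stylistic difference only; your added remark explaining why part~(i) does \emph{not} upgrade to regularity (the first/last-row condition is not row-permutation invariant) is a nice touch not present in the paper.
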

\begin{proof}
Part (i) follows from the fact that a sequence which belongs to the
first row, to the last row, or begins in the first column cannot be
supported.

For part (ii), assume that every odd sequence of $E$ begins in
the first column and fix $m$ distinct nodes $x_1,\ldots,x_m$.
By reordering the $x_i$'s we
obtain an increasing sequence $x'_1 < x'_2 < \cdots < x'_m$.
Applying the same permutation on the rows of $E$, we obtain an
interpolation matrix $E'$; clearly, the pair $(E,X)$ is regular if and
only if $(E',X')$ is. The latter pair is regular because $X'$ is ordered
and (by part (i)) $E'$ is order regular.
\end{proof}
We can now prove the main result of this section.
\begin{theorem} \label{regular}
Let $F$ be an $m \times (d+1)$ interpolation matrix. We denote by
$N_r$ the number of 1's in the last $r$ columns of $F$. If $N_1 \leq
1$ and $N_r + N_{r-1} \leq r$ for $r=2,\ldots, d+1$ then $F$ is regular.
\end{theorem}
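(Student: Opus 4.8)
The plan is to reduce Theorem~\ref{regular} to the situation covered by Corollary~\ref{ordercor}(ii) by \emph{completing} $F$ to a larger interpolation matrix $E$ with $|E|=d+1$ in such a way that $E$ satisfies the P\'olya condition and every odd sequence of $E$ begins in the first column; then $E$ is regular, and since $F \subseteq E$, Corollary~\ref{subsetcor} gives that $F$ is regular. So the whole argument is about constructing the right $E$.

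First I would translate the hypothesis. Writing $N_r$ for the number of $1$'s of $F$ in its last $r$ columns, the condition ``at most $r$ ones in the last $r$ columns'' that defines the upper P\'olya condition is exactly $N_r \le r$; this is implied by $N_r + N_{r-1} \le r$ together with $N_1 \le 1$ (since $N_r \le N_r + N_{r-1} \le r$ for $r \ge 2$, and $N_1 \le 1$ directly). So $F$ already satisfies the upper P\'olya condition, but it typically has $|F| < d+1$ and may have odd sequences that do not begin in the first column. The idea is to add the missing $d+1-|F|$ ones \emph{into column $0$} (the first column), one per new row, each forming its own length-one sequence. A length-one sequence is odd, but it begins in the first column, so it is harmless for Corollary~\ref{ordercor}(ii); and the odd sequences of $F$ that were \emph{not} anchored in column $0$ are the ones I need to worry about — I must show the stronger hypothesis $N_r + N_{r-1} \le r$ forces each such sequence to actually be extendable, within the row, down to column $0$, or else that I can neutralize it.

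The key combinatorial step, and the place I expect the real work to be, is this: I claim that under $N_r + N_{r-1} \le r$ for all $r$, after adding the length-one column-$0$ sequences as above, every odd sequence of the resulting matrix $E$ begins in the first column. Equivalently, $F$ itself has no odd sequence that both avoids column $0$ and is ``blocked'' (i.e.\ preceded by a $0$ in its row). The mechanism is a counting argument on the tail columns: if a maximal run of $1$'s in some row occupies columns $k, k+1, \ldots, k+2s$ (an odd sequence of length $2s+1$, with $k \ge 1$ and a $0$ in column $k-1$), look at $r = d+1-k$ and $r-1 = d+1-(k+1) = d-k$, i.e.\ the last $r$ and last $r-1$ columns, whose first columns are $k$ and $k+1$ respectively. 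The $2s+1$ ones of this sequence contribute $2s+1$ to $N_r$ and $2s$ to $N_{r-1}$ — a total contribution of $4s+1$, which is odd — while each \emph{other} row can contribute to $N_r + N_{r-1}$ only an even amount coming from columns $\ge k$, \emph{unless} that row also has a maximal run starting exactly at column $k$ or at column $k+1$ that is likewise odd-shaped relative to this window. Pushing this parity bookkeeping — essentially: the total $N_r + N_{r-1}$ counted column-by-column over the window has a forced parity that conflicts with $N_r + N_{r-1} \le r$ when $r$ is chosen at the start of a blocked odd sequence — should yield a contradiction unless the sequence reaches column $0$. I would set this up by summing over $r$ the inequalities, or by a direct "charging" of $1$'s in the last $r$ columns, to show that a blocked odd tail-sequence always makes $N_r + N_{r-1}$ too large for some $r$. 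This parity-versus-counting lemma is the heart of the proof; once it is in hand, the reduction to Corollary~\ref{ordercor}(ii) via the completion $E \supseteq F$ and then Corollary~\ref{subsetcor} is routine.

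One subtlety I would be careful about: the completion must keep the P\'olya condition for $E$ with $|E| = d+1$, which (as noted in the text) is equivalent to the upper P\'olya condition in that case — adding ones only in column $0$ can only help the ``first $r$ columns'' counts, so this is automatic once $F$ satisfies the upper P\'olya condition and we have exactly $d+1$ ones total. I would also note that the matrix $E$ may need extra all-zero rows or the new ones may be placed in fresh rows; since regularity quantifies over all choices of distinct knots and the knots attached to all-zero or column-$0$-only rows cause no trouble, this does not affect the argument. Thus the proof structure is: (1) observe $F$ satisfies the upper P\'olya condition; (2) complete $F$ to $E$ with $|E|=d+1$ by inserting $d+1-|F|$ ones into column $0$; (3) prove the parity/counting lemma that every odd sequence of $E$ begins in column $0$; (4) apply Corollary~\ref{ordercor}(ii) to conclude $E$ is regular; (5) apply Corollary~\ref{subsetcor} to conclude $F$ is regular.
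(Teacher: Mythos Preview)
Your overall strategy---complete $F$ to a matrix $E$ with $|E|=d+1$ to which Corollary~\ref{ordercor}(ii) applies, then descend via Corollary~\ref{subsetcor}---is exactly the right shape, and it is the paper's strategy too. The gap is in step~(3): the ``parity/counting lemma'' you propose is false. The hypothesis $N_1\le 1$ and $N_r+N_{r-1}\le r$ does \emph{not} prevent $F$ from having an odd sequence that avoids column~$0$. A minimal counterexample is the $1\times 3$ matrix
\[
F=(0\ 1\ 0),
\]
for which $N_1=0$, $N_2=1$, $N_3=1$, so $N_1\le 1$, $N_2+N_1=1\le 2$, $N_3+N_2=2\le 3$; yet the single $1$ in column~$1$ is a blocked odd sequence. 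Your completion would give
\[
E=\begin{pmatrix}0&1&0\\1&0&0\\1&0&0\end{pmatrix},
\]
which still has an odd sequence not beginning in the first column, so Corollary~\ref{ordercor}(ii) does not apply. Your parity heuristic fails here because, at the starting column $k=1$ of the odd sequence, you get $r=d+1-k=2$ and $N_r+N_{r-1}=1$, which is odd but comfortably below $r$; oddness alone does not force a violation of $N_r+N_{r-1}\le r$.

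What the paper does instead is not to argue that blocked odd sequences are absent, but to \emph{repair} them: for every odd sequence of $F$ not beginning in column~$0$, add a $1$ in the cell immediately to its left, obtaining a matrix $F'$ all of whose odd sequences begin in the first column. The role of the hypothesis $N_r+N_{r-1}\le r$ is precisely to guarantee that $F'$ still satisfies the upper P\'olya condition: at most $N_{r-1}$ new $1$'s can land in the last $r$ columns (one per odd sequence that starts in those columns, and each such sequence contributes at least one $1$ to $N_{r-1}$), so the last $r$ columns of $F'$ contain at most $N_r+N_{r-1}\le r$ ones. Only after this repair does one top up to $|E|=d+1$ by adding column-$0$ singletons in fresh rows. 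In the example above, the repair gives $F'=(1\ 1\ 0)$ and then $E=\begin{pmatrix}1&1&0\\1&0&0\end{pmatrix}$, to which Corollary~\ref{ordercor}(ii) applies cleanly. So your steps (1), (4), (5) survive; replace (2)--(3) by ``extend each blocked odd sequence one cell to the left, check the upper P\'olya condition via $N_r+N_{r-1}\le r$, then pad with column-$0$ ones.''
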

Note that the conditions on $N_r$ are a strengthening of the upper P\'olya
condition $N_r \leq r$.
\begin{proof}
We will add 1's to $F$ so as to obtain a matrix
$E$ satisfying the hypothesis of Corollary~\ref{ordercor}, part~(ii).
Corollary~\ref{subsetcor} will then imply that $F$ is regular.

In order to construct $E$ we proceed as follows. First, for every
odd sequence of $F$ which does not begin in the first column we add
a 1 in the cell immediately to the left of its first 1. All odd
sequences of the resulting matrix $F'$ begin in the first column.
Moreover, we have added at most $N_{r-1}$ 1's in the last $r$
columns of $F$ (note that we can add exactly $N_{r-1}$ 1's when the
last $r-1$ columns contain $N_{r-1}$ sequences of length 1). Since
$N_1 \leq 1$ and $N_r + N_{r-1} \leq r$, $F'$ satisfies the upper
P\'olya condition. If $|F'|=d+1$ we set $E=F'$. This matrix
satisfies the P\'olya condition and its odd sequences all begin in
the first column, so we can indeed apply Corollary~\ref{ordercor} to
get that $E$ is regular and. Since  $F \subseteq E$, by Corollary
\ref{subsetcor}, we conclude that $F$ is also regular.

If $|F'| < d+1$ we need to add more 1's. It suffices to add $d + 1 -
|F'|$ new rows to $F'$ with a $1$ in the first column and $0$'s
everywhere else. Denoting by $E$ the resulting matrix we clearly
have that $E$ satisfies the P\'olya condition, $|E| = d+1$ and its
odd sequences begin in the first column, so Corollary~\ref{ordercor}
and Corollary~\ref{subsetcor} apply here to conclude that $F$ is
regular. Note that $E$ and $F$ do not have the same format since $E$
has more rows, but we can apply Corollary~\ref{subsetcor} if we first
expand $F$ with $d+1-|F'|$ empty rows.
%We add them greedily,
%filling first column 0 from top to bottom, then column 1 from top to
%bottom, etc...
%We will add them all in the same row, for instance in the first one
%(any other row would be suitable too). We therefore start from column 0
%of the first row, and fill it from left to right until we have
%replaced exactly $d+1-|F'|$ 0's by 1's.
%Let $E$ be the resulting matrix. Again, $E$ contains exactly $d+1$ 1's
%and its odd sequences all begin in column 0. To complete the proof,
%we only need to check that $E$ satisfies the P\'olya condition.
%Suppose that the filling process has stopped in column $j$.
%%The first $j$ columns (numbered from 0 to $j-1$) are therefore completely
%%filled, and column number $j$ may be partially filled.
%The first $j+1$ entries of the first row are therefore filled with 1's.
%Let $M_r$ be the number of 1's in columns 0 to $r$. To check the
%P\'olya condition we have to show that $M_r \geq r+1$ for all $r$.
%This is clearly true for $r \leq j$ by consideration of the first row
%alone.
%To handle the case $r >j$ we note that the last $d-r$ columns of $F'$
%were not touched. Let $N'_{d-r}$ be the number of 1's in these columns.
%The corresponding columns of $E$ therefore contain also $N'_{d-r}$
%1's (in the same positions).
%The total number of 1's in $E$ is $d+1=M_r+N'_{d-r}$.
%Therefore $M_r \geq d+1-N'_{d-r} \geq d+1-(d-r) = r+1$.
%The second inequality follows from the fact that $F'$ satisfies the
%upper P\'olya condition.
\end{proof}

\subsection*{Proof of Theorem~\ref{independence}}

At this point we have enough knowledge of Birkhoff interpolation to prove Theorem~\ref{independence}.
In view of Proposition~\ref{dual} we need to show that
the interpolation problem
$$g^{(d-e_i)}(a_i) =0 \text{ \rm for } i=1,\ldots,k$$
has a solution space of dimension $d+1-k$.
Let $F$ be the corresponding interpolation matrix.
This matrix contains $d+1-k$ 1's and is of size $m \times (d+1)$ for
some $m \leq k$ (we have $m=k$ only when the $a_i$'s are all
distinct).
The hypothesis on the $n_j$'s implies that $F$ satisfies the
hypothesis of Theorem~\ref{regular}, and the result follows from the
regularity of $F$.

\section{Lower bounds} \label{lb_section}

System~(\ref{birkint}) is a linear system of equations in the coefficients of $g$. Following~\cite{lorentz84}, to set up this system it is convenient to work in the basis $(x^j/j!)_{0 \leq j \leq d}$ instead of the standard basis $(x^j)_{0 \leq j \leq d}$. We denote by $A(E,X)$ the matrix of the system in that basis, where as in the previous section $E$ denotes the corresponding interpolation matrix
and $X$ the set of knots. As already pointed out after Proposition~\ref{subset}, the pair $(E,X)$ is regular if and
only if $A(E,X)$ is of rank $|E|$. In our chosen basis, an
interpolation constraint of the form~(\ref{birkint}) reads:
$$\sum_{j=0}^d \frac{x_i^{j-k}}{(j-k)!} g_j = c_{i,k}$$
where the coefficients $g_0,\ldots,g_d$ are the unknowns and we
choose as in~\cite{lorentz84} to interpret $1/r!$ as 0 for $r<0$.
\begin{proposition} \label{split}
Consider a pair $(E,X)$ where $E$ is an interpolation matrix
of size  $m \times (d+1)$ and $X$ a set of $m$ knots.
Let $E_1$ be the matrix formed of the first $r+1$ columns of $E$ and $E_2$
the matrix formed of the remaining $d-r$ columns.

Suppose that $E_1$ contains at most $r+1$ 1's and $E_2$ at most $d-r$
1's.
If both pairs $(E_1,X)$, $(E_2,X)$ are regular then $(E,X)$ is regular.
\end{proposition}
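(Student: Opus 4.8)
The plan is to exploit the block structure of the matrix $A(E,X)$ induced by the column split, and show that regularity of the two sub-pairs forces $A(E,X)$ to have full row rank. Write $A = A(E,X)$; by the discussion after Proposition~\ref{subset}, it suffices to prove that $A$ has rank $|E|$, i.e. that its rows are linearly independent. The rows of $A$ are indexed by the 1's of $E$; a 1 in position $(i,k)$ with $k \leq r$ contributes a row coming from a constraint $g^{(k)}(x_i)=c_{i,k}$, and a 1 with $k \geq r+1$ contributes such a row as well. I would partition the rows of $A$ accordingly into a ``low'' group $L$ (rows from 1's in $E_1$, i.e. columns $0,\ldots,r$) and a ``high'' group $H$ (rows from 1's in $E_2$, i.e. columns $r+1,\ldots,d$).

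First I would observe the crucial triangular feature of the coordinate system $(x^j/j!)$: a constraint $g^{(k)}(x_i)=c_{i,k}$ involves only the unknowns $g_k, g_{k+1},\ldots,g_d$ (since $1/(j-k)!$ is interpreted as $0$ for $j<k$). Hence every row in the ``high'' group $H$ has zero entries in the first $r+1$ coordinates $g_0,\ldots,g_r$, while rows in $L$ may be supported anywhere. This means $A$, after reordering rows so that $L$ comes first and columns in their natural order, has the block form
\[
A = \begin{pmatrix} A_{11} & A_{12} \\ 0 & A_{22} \end{pmatrix},
\]
where $A_{11}$ has columns $g_0,\ldots,g_r$ and rows from $L$, and $A_{22}$ has columns $g_{r+1},\ldots,g_d$ and rows from $H$. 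Now I claim $A_{22}$ is exactly (up to the harmless relabeling of columns by a shift) the matrix $A(E_2, X)$: a constraint $g^{(k)}(x_i)=0$ with $k \geq r+1$, restricted to the variables $g_{r+1},\ldots,g_d$, is $\sum_{j\geq r+1} \frac{x_i^{j-k}}{(j-k)!} g_j$, which is precisely the corresponding row of $A(E_2,X)$ once we identify the degree-$(d-r-1)$ polynomial space with $\rd[X]$'s top block. Since $(E_2,X)$ is regular and $E_2$ has at most $d-r$ ones, $A_{22}$ has full row rank equal to $|E_2|$.

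With the block-triangular form in hand, the rank computation is routine: a linear dependence among the rows of $A$ would, by looking only at the first $r+1$ columns, give a linear dependence among the rows of $A_{11}$ with coefficients from the $L$-rows; and $A_{11}$ is the matrix of the interpolation constraints from $E_1$ restricted to $\rr_r[X]$ (the polynomials $g_0 + g_1 x/1! + \cdots + g_r x^r/r!$), i.e. $A_{11} = A(E_1,X)$, which has full row rank $|E_1|$ by regularity of $(E_1,X)$. So all $L$-coefficients vanish, and then the remaining dependence lies among the $H$-rows, i.e. among the rows of $A_{22}$, which is full row rank; so all $H$-coefficients vanish too. Hence $A$ has rank $|E_1|+|E_2| = |E|$, and $(E,X)$ is regular.

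The only place requiring care—and the step I would watch most closely—is the identification $A_{11}=A(E_1,X)$ and $A_{22}=A(E_2,X)$: one must check that restricting the interpolation problem to the bottom (or top) block of coefficients really reproduces the same interpolation matrix and knot set, including the convention $1/r!=0$ for $r<0$, and that the hypothesis ``$E_1$ has at most $r+1$ ones, $E_2$ at most $d-r$ ones'' is exactly what is needed for ``regular'' to be meaningful in each block (so that the expected solution-space dimensions are nonnegative). Everything else is the standard fact that a block-triangular matrix with full-row-rank diagonal blocks has full row rank.
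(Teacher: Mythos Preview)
Your proof is correct and follows essentially the same approach as the paper: both arguments reorder the rows of $A(E,X)$ into the block-triangular form $\begin{pmatrix} A(E_1,X) & * \\ 0 & A(E_2,X) \end{pmatrix}$ (using that a row from a constraint $g^{(k)}(x_i)$ with $k\geq r+1$ has zeros in its first $r+1$ entries) and then conclude full row rank from the regularity of the two diagonal blocks. You have simply written out in more detail the identifications $A_{11}=A(E_1,X)$, $A_{22}=A(E_2,X)$ and the standard block-triangular rank argument that the paper leaves to the reader (citing Theorem~1.4 of~\cite{lorentz84}).
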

\begin{proof}
The case where $|E_1|=r+1$ and $|E_2|=d-r$ is treated in Theorem~1.4
of~\cite{lorentz84}. Their argument extends to the general
case. Indeed, as shown in~\cite{lorentz84} the rank of $A(E,X)$ is at
least equal to the sum of the ranks of $A(E_1,X)$ and $A(E_2,X)$. For
the reader's convenience, we recall from~\cite{lorentz84} that this
inequality is due to the fact that $A(E,X)$ can be transformed by a
permutation of rows into a matrix of the form\footnote{We give an
  example in the appendix.}
$$\left(\begin{array}{cc}
A(E_1,X) & * \\
0 & A(E_2,X)
\end{array}\right).$$
The two matrices $A(E_1,X)$, $A(E_2,X)$ are respectively of rank $|E_1|$ and $|E_2|$ since the corresponding pairs are assumed to be regular. Thus, $A(E,X)$ is of rank at least $|E|=|E_1|+|E_2|$. This matrix must in fact be of rank exactly $|E|$ since it has $|E|$ rows, and we conclude that the pair $(E,X)$ is regular.
\end{proof}

\begin{lemma}  \label{slope}
For any finite sequence $(u_i)_{0 \leq i \leq n}$ of real numbers
with $n \geq 1$ there is an index $s \in \{0,\ldots,n-1\}$ such that
$(u_{s+t}- u_s)/t \leq (u_n - u_0)/n$ for every $t=1,\ldots,n-s$.
\end{lemma}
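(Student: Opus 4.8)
The plan is to reduce this to a standard fact about prefix averages, reformulated via a discrete analogue of the "sunrise lemma." Set $v_i = u_i - u_0 - (i/n)(u_n - u_0)$ for $i = 0, \ldots, n$, so that $v_0 = v_n = 0$. The desired conclusion $(u_{s+t} - u_s)/t \leq (u_n - u_0)/n$ for all $t = 1, \ldots, n-s$ translates into $v_{s+t} \leq v_s$ for all such $t$; that is, I want an index $s$ with $v_s \geq v_{s+1}, v_{s+2}, \ldots, v_n$ simultaneously. First I would observe that such an $s$ always exists: take $s$ to be the \emph{smallest} index in $\{0, 1, \ldots, n-1\}$ at which the sequence $(v_0, v_1, \ldots, v_{n-1})$ attains its maximum. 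Then $v_s \geq v_i$ for every $i \in \{0, \ldots, n-1\}$ by choice of $s$, and $v_s \geq v_0 = v_n$ as well, so $v_s \geq v_i$ for all $i \geq s$, which is exactly what is needed. (Any index achieving the maximum of $v_0,\dots,v_{n-1}$ works since $v_n=v_0\le v_s$; taking the smallest one is just a clean way to name it.)

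Unwinding the substitution gives the claim directly: for $t = 1, \ldots, n-s$ we have $v_{s+t} \leq v_s$, i.e.
$$u_{s+t} - u_0 - \frac{s+t}{n}(u_n - u_0) \leq u_s - u_0 - \frac{s}{n}(u_n - u_0),$$
which simplifies to $u_{s+t} - u_s \leq \frac{t}{n}(u_n - u_0)$, and dividing by $t > 0$ yields $(u_{s+t} - u_s)/t \leq (u_n - u_0)/n$.

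There is essentially no obstacle here: the only thing to be careful about is that the maximum of $v_0, \ldots, v_{n-1}$ is taken over indices strictly less than $n$ (so that $s + t$ ranges over $s+1, \ldots, n$ and in particular includes $n$), and that one then invokes $v_n = v_0$ to cover the endpoint $s+t = n$. The hypothesis $n \geq 1$ guarantees the index set $\{0, \ldots, n-1\}$ is nonempty. If one prefers to avoid the auxiliary sequence, the same argument can be phrased as: choose $s$ minimizing $u_s - \frac{s}{n} u_n$ over $s \in \{0, \ldots, n-1\}$ wait — maximizing $u_s - \frac{s}{n}(u_n-u_0)$, equivalently — but the substitution above makes the bookkeeping transparent, so I would present it that way.
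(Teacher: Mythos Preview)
Your proof is correct and complete: after the affine normalization $v_i = u_i - u_0 - (i/n)(u_n-u_0)$, the problem reduces to finding an index $s<n$ dominating all later terms, and choosing the maximizer of $v_0,\ldots,v_{n-1}$ (using $v_n=v_0$ to handle the endpoint) does the job cleanly.

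The paper takes a different, though equally short, route: it picks $s$ to \emph{minimize} the slope $(u_n-u_i)/(n-i)$ over $i\in\{0,\ldots,n-1\}$, and then uses the mediant inequality to deduce $(u_{s+t}-u_s)/t \leq (u_n-u_s)/(n-s)$ from $(u_n-u_s)/(n-s)\leq (u_n-u_{s+t})/(n-s-t)$. Geometrically, the paper selects the point with the shallowest secant to the right endpoint, whereas you select the point lying highest above the chord; these need not coincide, but both work. Your normalization trades the algebraic step (the mediant computation, plus an implicit case split at $t=n-s$) for a one-line ``take the max'' argument, which is arguably more transparent. The informal aside in your last paragraph (``wait ---'') should of course be cleaned up in a final version, but the mathematics stands.
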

\begin{proof}
Let $\alpha = \min_{0 \leq i \leq n-1} (u_n - u_i) / (n-i)$ and let
$s$ be an index where the minimum is achieved.
We have $(u_n - u_s) / (n-s) = \alpha \leq (u_n - u_0)/n$.

For every $t=1,\ldots,n-s$ we also have $(u_n -u_s) / (n-s) \leq
(u_n - u_{s+t})/(n-s-t)$, which implies $(u_{s+t}-u_s)/t \leq (u_n -
u_s)/(n-s) \leq (u_n - u_0) /n$.
\end{proof}
Our lower bound results are easily derived from the following theorem.
\begin{theorem} \label{lbtool}
Consider a polynomial identity of the form:
\begin{equation} \label{identity}
\sum_{i=1}^k \alpha_i (x+x_i)^d = \sum_{i=1}^l \beta_i (x+y_i)^{e_i}
\end{equation}
where the $x_i$ are distinct real constants, the constants
$\alpha_i$ are not all zero, the $\beta_i$ and $y_i$ are arbitrary
real constants, and $e_i < d$ for every $i$. Then we must have $k+l
> (d+2)/2$.
\end{theorem}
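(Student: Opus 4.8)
The plan is to rephrase the polynomial identity~(\ref{identity}) as a linear independence statement and then apply the duality of Proposition~\ref{dual} together with the regularity criterion of Theorem~\ref{regular}. Rewrite~(\ref{identity}) as
\begin{equation*}
\sum_{i=1}^k \alpha_i (x+x_i)^d - \sum_{i=1}^l \beta_i (x+y_i)^{e_i} = 0,
\end{equation*}
so that we have a nontrivial linear dependence among the $k+l$ polynomials $(x+x_i)^d$ ($1\le i\le k$) and $(x+y_i)^{e_i}$ ($1\le i\le l$). These polynomials all lie in $\rd[X]$. The coefficients $\alpha_i$ are not all zero, so this dependence is nontrivial even after we discard the $\beta_i$-terms with $\beta_i=0$ and merge any coincident pairs $(y_i,e_i)$; in particular the surviving family is \emph{not} linearly independent. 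The strategy is: if $k+l \le (d+2)/2$ then the hypotheses of Theorem~\ref{independence} would be forced on this family (after removing duplicates), giving linear independence, a contradiction. Hence $k+l > (d+2)/2$.

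The key step is therefore to verify the $n_j$-conditions of Theorem~\ref{independence} from the bound $k+l \le (d+2)/2$. Let $m \le k+l$ be the number of distinct polynomials in the merged family, let $n_j$ count those of degree less than $j$, and observe that the $k$ polynomials $(x+x_i)^d$ all have degree exactly $d$, while the $l$ (or fewer) survivors among the $(x+y_i)^{e_i}$ have degree $e_i < d$; so $n_{d+1} = m \le k+l$ and in fact $n_d \le l$. We need $n_1 \le 1$ and $n_j + n_{j-1} \le j$ for all $j$. The first is immediate ($n_1 \le 1$ since at most one of the $f_i$ can be the zero polynomial — here no $f_i$ is zero — and $n_1$ counts polynomials of degree $<1$). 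For the second, one wants to bound how many $f_i$ can have small degree; here Lemma~\ref{slope} enters, applied to the sequence $u_j = n_j$ (or a closely related counting sequence), to locate a ``good'' index where the averaged growth of $n_j$ is controlled by the global average $n_{d+1}/(d+1) \le (k+l)/(d+1) \le 1/2$, which is exactly the slope needed to push $n_j + n_{j-1} \le j$. If some $n_j + n_{j-1} > j$ fails, Lemma~\ref{slope} produces a window of columns where too many 1's are packed relative to the available slack, contradicting $k+l \le (d+2)/2$.

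I expect the main obstacle to be the bookkeeping in the second condition: translating the crude global count $k+l \le (d+2)/2$ into the \emph{local} two-consecutive-column bound $n_j + n_{j-1} \le j$. A naive argument only gives $n_j \le j$ (the upper P\'olya condition), not the stronger paired inequality of Theorem~\ref{regular}; the factor-of-two gap between $(d+2)/2$ and $d$ is precisely what should make the paired bound attainable, and Lemma~\ref{slope} is the tool that converts the average slope $\le 1/2$ into the required pointwise estimate. A clean way to organize this is to argue by contradiction: assume $k+l \le (d+2)/2$, suppose the family is not linearly independent, and derive from the failure of linear independence (via Proposition~\ref{dual}, i.e.\ the interpolation matrix $F$ is \emph{not} regular) a violation of one of the hypotheses of Theorem~\ref{regular}; combined with Lemma~\ref{slope} this forces $k+l > (d+2)/2$. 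One should also double-check the edge cases where $d$ is odd versus even and where the merged family has strictly fewer than $k+l$ members, but these only help the inequality.
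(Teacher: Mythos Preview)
Your plan has a genuine gap: the paired inequality $n_j+n_{j-1}\le j$ (equivalently $N_t+N_{t-1}\le t$ for the interpolation matrix) does \emph{not} follow from $k+l\le (d+2)/2$, even after merging duplicates or after reducing to a linearly independent right-hand side. A concrete obstruction: take $d=10$, $k=3$ polynomials $(x+x_i)^{10}$, and $l=3$ polynomials of degrees $0,1,2$ (say $1,\ x+y_1,\ (x+y_2)^2$, which are independent). Then $k+l=6=(d+2)/2$, yet $n_1=1$, $n_2=2$, so $n_2+n_1=3>2$. Thus Theorem~\ref{independence} cannot be invoked on the whole family, and your contrapositive argument (``if $F$ is not regular then the hypotheses of Theorem~\ref{regular} fail, hence $k+l>(d+2)/2$'') breaks down because those hypotheses can fail without violating the global bound. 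Lemma~\ref{slope} does not rescue this: it does not produce a contradiction from a bad index $j$, it only finds an index $s$ from which the \emph{forward} increments are controlled. (A minor side issue: your justification of $n_1\le1$ is off---$n_1$ counts constants, not just the zero polynomial; the correct reason is that all $(x+y_i)^0$ equal $1$, or, in the paper's framing, that an independent family contains at most one constant.)

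What the paper actually does is more structural and uses an ingredient you did not mention, Proposition~\ref{split}. First it reduces (by discarding redundancies on the right) to the case where the $(x+y_i)^{e_i}$ are linearly independent. Then it applies Lemma~\ref{slope} to the column-count sequence $(N_t)$ to locate an index $s$ and splits the interpolation matrix vertically into $E_1$ (first $d{-}s{+}1$ columns) and $E_2$ (last $s$ columns). The choice of $s$ guarantees that $E_1$ \emph{does} satisfy the hypotheses of Theorem~\ref{regular}, so $E_1$ is regular. Regularity of the pair $(E_2,X)$ comes not from Theorem~\ref{regular} at all but from Proposition~\ref{dual}: $E_2$ is precisely the interpolation matrix of the polynomials $(x+y_i)^{e_i}$ with $e_i\le s-1$, which is a subfamily of an independent family. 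Proposition~\ref{split} then combines the two pieces. So the two missing ideas in your plan are the reduction to an independent right-hand side (which is what powers the regularity of $E_2$) and the vertical splitting via Proposition~\ref{split}; Lemma~\ref{slope} serves to choose the split, not to force a global contradiction.
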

\begin{proof}
We assume without loss of generality that the $l$ polynomials
$(x+y_i)^{e_i}$ are linearly independent. Indeed, the right-hand
side of~(\ref{identity}) could otherwise be rewritten as a linear
combination of an independent subfamily, and this would only
decrease $l$. Let us also assume that $k+l \leq (d+2)/2$. Then we
will show that the $k+l$ polynomials $(x+x_i)^d$, $(x+y_i)^{e_i}$
must be linearly independent. This is clearly in contradiction
with~(\ref{identity}).

In view of Proposition~\ref{dual}, to show that our $k+l$ polynomials
are linearly independent we need to show that the corresponding
interpolation problem has a solution space of dimension $d+1-k-l$.
Let $E$ be the corresponding interpolation matrix and $X$ the set of
knots: $|X|=m$ where $m$ is the number of distinct points
in $x_1,\ldots,x_k,y_1,\ldots,y_l$; moreover, $E$ is a matrix of size
$m \times (d+1)$ which contains $k+l$ 1's. We need to show that the
pair $(E,X)$ is regular.

Let $N_t$ be the number of 1's in the last $t$ columns of $E$. We
must have $N_1 \leq 1$, or else the independent family
$(x+y_i)^{e_i}$ would contain more than one constant polynomial. We
can now complete the proof of the theorem in the special case where
$E$ satisfies the conditions $N_t+N_{t-1} \leq t$ for every
$t=2,\ldots,d+1$. Indeed, in this case $E$ is regular by
Theorem~\ref{regular} (remember that this is how we proved
Theorem~\ref{independence}, our main linear independence result).
For the general case, the idea of the proof is to:
\begin{itemize}
\item[(i)] Split vertically $E$ in two matrices $E_1,E_2$.
\item[(ii)] Apply the same argument (Theorem~\ref{regular}) to $E_1$.
\item[(iii)] Obtain the regularity of the pair $(E_2,X)$ from the
linear independence of the $(x+y_i)^{e_i}$.
\item[(iv)] Conclude from Proposition~\ref{split} that the pair $(E,X)$ is
  regular.
\end{itemize}
We now explain how to carry out these four steps. For the first one,
note that $N_{d+1}=|E|=k+l \leq (d+2)/2$. Let us apply
Lemma~\ref{slope} to the sequence $(N_i)_{0 \leq i \leq d+1}$
beginning with $N_0=0$. The lemma shows the existence of an index $s
\in \{0,\ldots,d\}$ such that $$\frac{N_{s+t} - N_s}{t} \leq
\frac{N_{d+1}}{d+1} \leq \frac{d+2}{2(d+1)} = \frac{1}{2} +
\frac{1}{2(d+1)} \leq \frac{1}{2} + \frac{1}{2t}$$ 
for every $t=1,\ldots,d+1-s$.
Let $E_1$ be the
matrix formed of the first $r+1$ columns of $E$, where $r=d-s$. The
number of 1's in the last $t$ columns of $E_1$ is $N'_t = N_{s+t} -
N_s \leq (t+1)/2$. In particular, $N'_1 \leq 1$ and, since $N'_t,
N'_{t-1}$ are integers, we get that $N'_t + N'_{t-1} \leq \lfloor
(2t + 1)/2 \rfloor = t$ for all $t \in \{2,\ldots,r+1\}$. This
matrix therefore satisfies the hypotheses of Theorem~\ref{regular},
and we conclude that $E_1$ is regular. This completes step~(ii). For
step~(iii), we note that since $E_2$ has $s<d+1$ columns the
Birkhoff interpolation problem corresponding to the polynomials
$(x+y_i)^{e_i}$ with $e_i \leq s-1$ admits $(E_2,X)$ as its pair.
Since these polynomials are assumed to be linearly independent,
$E_2$ must contain at most $s$ 1's and $(E_2,X)$ must indeed be a
regular pair. Finally, we conclude from Proposition~\ref{split} that
$(E,X)$ is regular as well.
\end{proof}

\begin{theorem}[First lower bound] \label{lb1}
Consider a polynomial of the form
\begin{equation} \label{hardpoly1}
H_1(x)=\sum_{i=1}^k \alpha_i (x+x_i)^d
\end{equation}
 where the $x_i$ are distinct real constants, the $\alpha_i$ are
nonzero real constants, and $k \leq (d+2)/4$. If $H_1$ is written
under the form
\begin{equation} \label{easy1}
H_1(x)=\sum_{i=1}^l \beta_i (x+y_i)^{e_i}
\end{equation}
with $e_i \leq d$ for every $i$ then we must have $l \geq k$.
\end{theorem}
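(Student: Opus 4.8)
The plan is to derive Theorem~\ref{lb1} directly from Theorem~\ref{lbtool} by setting up the right polynomial identity. Suppose $H_1$ is written under the form~(\ref{easy1}) with $l$ terms and all $e_i \le d$. We may of course discard any term with $\beta_i = 0$, so assume all $\beta_i \ne 0$. The idea is to move to one side those terms $(x+y_i)^{e_i}$ with $e_i = d$: write $\{1,\ldots,l\} = S \cup T$ where $S$ indexes the terms with $e_i = d$ and $T$ the terms with $e_i < d$. Then~(\ref{easy1}) becomes
\begin{equation*}
\sum_{i=1}^k \alpha_i (x+x_i)^d - \sum_{i \in S} \beta_i (x+y_i)^d = \sum_{i \in T} \beta_i (x+y_i)^{e_i},
\end{equation*}
which, after collecting terms on the left-hand side that share a common base $x + c$ (i.e.\ merging an $x_i$ with a $y_i$ when they coincide), has the shape of identity~(\ref{identity}): a sum of $d$-th powers of distinct linear forms on the left, and a sum of lower-degree powers on the right, with $e_i < d$ throughout the right-hand side.

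The main point to check is that the left-hand side is not identically zero, i.e.\ that after the merging the coefficients are not all zero. Here is where the hypothesis $k \le (d+2)/4$ enters. If all coefficients on the left vanished, then in particular $H_1 = \sum_{i \in S} \beta_i (x+y_i)^d$, so $H_1$ itself would be a sum of at most $|S| \le l$ $d$-th powers of linear forms. But a cleaner route is the following: suppose for contradiction that $l \le k-1$. Apply Theorem~\ref{lbtool} after the rearrangement above. Let $k'$ be the number of distinct linear forms on the left-hand side and $l' = |T| \le l$ the number on the right. We have $k' \ge k - |S| \ge k - l$ (each $y_i$ with $e_i=d$ can cancel at most one of the original $k$ terms), hence $k' + l' \ge (k - |S|) + |S|\cdot 0 + |T|$; more simply $k' + l' \ge k - l + l = k$ is too weak, so instead note $k' \ge k - |S|$ and $l' = |T| = l - |S|$, giving $k' + l' \ge k - |S| + l - |S|$. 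This can be small, so the rearrangement must be handled more carefully: one should \emph{not} cancel, but rather keep all $k$ terms of $H_1$ and the $|S|$ degree-$d$ terms from the right as a single sum of (at most $k + |S|$, but generically $k + |S|$ distinct) $d$-th powers.

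Let me restate the clean argument. Assume toward a contradiction that $l \le k - 1$. Consider the identity
\begin{equation*}
\sum_{i=1}^k \alpha_i (x+x_i)^d \;-\; \sum_{i \in S} \beta_i (x+y_i)^{d} \;=\; \sum_{i \in T} \beta_i (x+y_i)^{e_i},
\end{equation*}
and combine the two sums on the left into a single sum $\sum_{j=1}^{k''} \gamma_j (x + z_j)^d$ over the distinct values $z_j$ among $\{x_1,\ldots,x_k\} \cup \{y_i : i \in S\}$. The coefficient $\gamma_j$ attached to each original knot $x_i$ equals $\alpha_i$ unless some $y_{i'}$ ($i' \in S$) coincides with $x_i$; since the $x_i$ are distinct and there are only $|S|$ values $y_{i'}$, at most $|S|$ of the $\gamma_j$ coming from the $x_i$'s can be affected, so at least $k - |S|$ of them equal a nonzero $\alpha_i$ — provided $k - |S| \ge 1$. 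Now $|S| \le l \le k-1$, so indeed $k - |S| \ge 1$ and the left-hand side is a \emph{nonzero} identity of the form~(\ref{identity}), with $k'' \le k + |S|$ terms of degree $d$ on the left (not all coefficients zero) and $l'' = |T| = l - |S|$ terms of degree $< d$ on the right. Theorem~\ref{lbtool} then forces $k'' + l'' > (d+2)/2$. But $k'' + l'' \le (k + |S|) + (l - |S|) = k + l \le k + (k-1) = 2k - 1 \le 2\cdot\frac{d+2}{4} - 1 = \frac{d}{2} < \frac{d+2}{2}$, a contradiction. Hence $l \ge k$.

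The only delicate step is the bookkeeping that shows the merged left-hand side is a genuine (nonzero) instance of~(\ref{identity}) with at most $k+|S|$ terms; the hypothesis $k \le (d+2)/4$ is then used purely to make the arithmetic $k + l \le 2k - 1 < (d+2)/2$ go through, contradicting Theorem~\ref{lbtool}. I expect no real obstacle here beyond writing the merging carefully; everything substantive has already been done in Theorem~\ref{lbtool}.
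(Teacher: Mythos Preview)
Your argument is correct and is essentially identical to the paper's: both assume $l<k$, move the $|S|=k'$ degree-$d$ terms to the left, observe that at least $k-|S|>0$ of the original $\alpha_i$ survive so the left side is nonzero, and then apply Theorem~\ref{lbtool} to get $(k+|S|)+(l-|S|)=k+l>(d+2)/2$, contradicting $k+l\le 2k-1\le d/2$. The only difference is presentational---your write-up contains a false start before the clean argument, which you should excise.
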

\begin{proof}
Assume first that $e_i < d$ for all $i$. By Theorem~\ref{lbtool} we
must have $k+l > (d+2)/2$, so $l > (d+2)/2 - k \geq k$. Consider now
the general case and assume that $l < k$. We reduce to the previous
case by moving on the side of~(\ref{hardpoly1}) the $k'$ polynomials
in~(\ref{easy1}) of degree $e_i=d$. On the second side remains a sum
of $l-k'$ terms of degree less than $d$.
 We have on the first side a sum  of terms of degree $d$.
Taking possible cancellations into account, the number of such terms
is at least $k-k'>0$, and at most $k+k'$. We must therefore have
$(k+k')+(l-k') > (d+2)/2$, so $l > (d+2)/2 -k \geq k$ after all.
\end{proof}
In other words,  writing $H_1$ under form~(\ref{hardpoly1}) is
exactly optimal when $k \leq (d+2)/4$. We can give another lower
bound of order $d$ (with an improved constant) for a polynomial of a
different form.
\begin{theorem}[Second lower bound] \label{lb2}
Let $H_2 \in \rd[X]$ be the polynomial $H_2(x)=(x+1)^{d+1}-x^{d+1}$.
If $H_2$ is written under the form
$$H_2(x)=\sum_{i=1}^l \beta_i (x+y_i)^{e_i}$$
with $e_i \leq d$ for every $i$ then we must have $l > (d-1)/2$.
\end{theorem}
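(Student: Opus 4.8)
The plan is to sidestep the interpolation machinery entirely and reduce the statement directly to Theorem~\ref{lbtool}, taking advantage of the fact that $H_2$ comes pre-packaged as a difference of two $(d+1)$-st powers of degree-one polynomials. Suppose $H_2(x)=\sum_{i=1}^l\beta_i(x+y_i)^{e_i}$ with $e_i\le d$ for every $i$. Substituting the definition of $H_2$ produces the polynomial identity
$$(x+1)^{d+1}+(-1)\cdot x^{d+1}=\sum_{i=1}^l\beta_i(x+y_i)^{e_i},$$
which is precisely of the form handled by Theorem~\ref{lbtool}, provided one reads the parameter ``$d$'' there as $d+1$: on the left we have $k=2$ powers with common exponent $d+1$, distinct knots $x_1=1$ and $x_2=0$, and coefficients $\alpha_1=1$, $\alpha_2=-1$ which are not all zero; on the right every exponent satisfies $e_i\le d<d+1$, so all hypotheses of Theorem~\ref{lbtool} hold for this identity.

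I would then simply invoke Theorem~\ref{lbtool} with $d+1$ in place of $d$ and with $k=2$, obtaining $2+l>((d+1)+2)/2=(d+3)/2$, that is, $l>(d-1)/2$, which is exactly the claimed bound. A point worth stressing in the write-up is that we never use (or need) the fact that $H_2$ actually has degree $d$ rather than $d+1$: the top-degree terms of $(x+1)^{d+1}$ and $x^{d+1}$ cancel, but Theorem~\ref{lbtool} only constrains the exponents appearing in the left-hand powers, not the degree of the polynomial they add up to, so this cancellation is irrelevant to the argument.

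I do not anticipate a genuine difficulty here; the ``hard part'' is really just spotting the right normalization. In particular one should resist the temptation to differentiate the assumed representation: differentiating $H_2$ gives $(d+1)((x+1)^d-x^d)=\sum_i\beta_ie_i(x+y_i)^{e_i-1}$, to which Theorem~\ref{lbtool} applies with $k=2$ and parameter $d$, yielding only $l>(d-2)/2$ --- which suffices for even $d$ but is not quite enough for odd $d$. Keeping the identity in its undifferentiated form, at the cost of the restriction $e_i\le d$, is what buys the extra slack; this is also the source of the factor of $2$ improvement over Theorem~\ref{lb1} advertised in the introduction, and of the fact that the bound applies only when the exponents are bounded by $d$.
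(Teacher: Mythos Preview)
Your proposal is correct and is essentially identical to the paper's own proof: the paper also just applies Theorem~\ref{lbtool} with $d$ replaced by $d+1$ and $k=2$ to obtain $2+l>(d+3)/2$, hence $l>(d-1)/2$. Your additional remarks about the cancellation of the top-degree terms and about the weaker bound obtained by differentiating are accurate but not needed for the argument.
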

\begin{proof}
This follows directly from Theorem~\ref{lbtool} after replacing $d$
by $d+1$ in~(\ref{identity}). Since $k=2$ we must have
$2+l>(d+3)/2$, i.e., $l > (d-1)/2$.
\end{proof}

This result shows that allowing exponents $e_i >d$ can drastically
decrease the ``complexity'' of a polynomial since $H_2$ can be expressed
as the difference of only two $(d+1)$-st powers.
Such savings cannot be obtained for all polynomials. Indeed, the next
result, which subsumes Theorem~\ref{lb1}, shows that no improvement is
possible for $H_1$ even if arbitrarily large powers are allowed.
\begin{theorem}[Third lower bound]
Consider a polynomial of the form
\begin{equation} %\label{hardpoly1}
H_1(x)=\sum_{i=1}^k \alpha_i (x+x_i)^d
\end{equation}
 where the $x_i$ are distinct real constants, the $\alpha_i$ are
nonzero real constants, and $k \leq (d+2)/4$. If $H_1$ is written
under the form
\begin{equation} \label{easy3}
H_1(x)=\sum_{i=1}^l \beta_i (x+y_i)^{e_i}
\end{equation}
then we must have $l \geq k$.
\end{theorem}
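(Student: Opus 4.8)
The plan is to run the argument of Theorem~\ref{lbtool} in the larger space $\rr_D[X]$, where $D=\max(d,\max_i e_i)$. If $D=d$ there is nothing new, since Theorem~\ref{lb1} applies directly; so assume $D>d$ and suppose, for contradiction, that $l<k$. As in the proof of Theorem~\ref{lbtool} I would first reduce to the case where the polynomials $(x+y_i)^{e_i}$ are linearly independent (replacing the right-hand side of~\eqref{easy3} by a combination of an independent subfamily otherwise, which only decreases $l$); a routine reduction, moving one term to the other side, also lets one assume that no $(x+y_i)^{e_i}$ equals any $(x+x_i)^d$, so the $k+l$ polynomials in~\eqref{easy3} are pairwise distinct. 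By Proposition~\ref{dual} applied in $\rr_D[X]$, it then suffices to show that the pair $(E,X)$ is regular, where $E$ is the $m\times(D+1)$ interpolation matrix having a $1$ in column $D-d$ for each $x_i$ and a $1$ in column $D-e_i$ for each $y_i$, so that $|E|=k+l$.

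The difficulty, which does not arise in Theorem~\ref{lbtool}, is that the $k$ ones coming from the $(x+x_i)^d$ now lie in a single \emph{interior} column (namely column $D-d$) rather than in the first column, so neither Theorem~\ref{regular} nor the splitting used in Theorem~\ref{lbtool} applies to $E$ directly. To locate a good place to split, write $N_j$ for the number of ones in the last $j$ columns of $E$ and set $M_j=N_j-j/2$ for $j=0,\dots,D+1$, so $M_0=0$. The hypothesis $k\le(d+2)/4$ together with $l<k$ gives $k+l\le d/2$, hence $N_j\le k+l\le d/2$ for every $j$; in particular $M_j=N_j-j/2<0$ for all $j\ge d+1$, and also $M_d=N_d-d/2\le (k-1)-d/2<0$, since the last $d$ columns of $E$ receive ones only from the $(x+y_i)^{e_i}$ of degree $<d$ (of which there are at most $l\le k-1$). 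Therefore $M_0>M_j$ for all $j\ge d$, so any index $s$ at which $M$ attains its maximum satisfies $s\le d-1$; fix such an $s$.

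Next I would split $E$ between columns $D-s$ and $D-s+1$: let $E_1$ be the first $D-s+1$ columns and $E_2$ the last $s$ columns. For every admissible $t$ one has $M_{s+t}\le M_s$, which rewrites as $N_{s+t}-N_s\le t/2$; thus, writing $N'_t$ for the number of ones in the last $t$ columns of $E_1$, we get $N'_t=N_{s+t}-N_s\le t/2$, so $N'_1=0$ and $N'_t+N'_{t-1}\le t-1$ for all $t\ge 2$. Hence $E_1$ satisfies the hypotheses of Theorem~\ref{regular} and is regular. Since $s\le d-1$, the $s$ columns of $E_2$ correspond to derivative orders at least $D-s+1$, that is, to polynomials $(x+y_i)^{e_i}$ of degree at most $s-1<d$; these form a subfamily of the linearly independent family $(x+y_i)^{e_i}$, so Proposition~\ref{dual}, used in $\rr_{s-1}[X]$, shows that $(E_2,X)$ is regular. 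By Proposition~\ref{split} the pair $(E,X)$ is regular, hence the $k+l$ polynomials $(x+x_i)^d$, $(x+y_i)^{e_i}$ are linearly independent --- contradicting the nontrivial identity~\eqref{easy3}, in which the $\alpha_i$ are nonzero. This contradiction yields $l\ge k$.

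The step I expect to be the crux is the choice of the splitting column: after passing to $\rr_D[X]$ one must argue that there is still a legitimate vertical split placing the heavy interior column strictly inside $E_1$. What makes this possible is exactly the bound $k\le(d+2)/4$ --- stronger than the bound $k+l\le(d+2)/2$ used in Theorem~\ref{lbtool} --- which forces the profile $M_j$ to be negative throughout $j\ge d$ and therefore pins the maximiser of $M$ into $\{0,\dots,d-1\}$. Once the split point is fixed, verifying the P\'olya-type inequalities for $E_1$, checking $|E_2|\le s$, and recognising $E_2$ as a subfamily of the independent powers are all routine, as is the preliminary reduction making the two families of powers disjoint.
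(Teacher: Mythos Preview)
Your argument is correct, but the paper takes a much shorter and rather different route. Instead of reopening the Birkhoff machinery in $\rr_D[X]$, the paper simply lets $n=\max\{e_i:\beta_i\neq 0\}$, and for $n>d$ swaps the two sides of the identity at the top degree: the degree-$n$ terms go to the left, the $k$ degree-$d$ terms of $H_1$ go to the right. The resulting identity has exactly the shape required by Theorem~\ref{lbtool} with $d$ replaced by $n$, giving $k+l>(n+2)/2$; since $n\ge d+1$ and $2k\le(d+2)/2$, this yields $l>(n+2)/2-k\ge k$. No interpolation matrices, no splitting, no disjointness reduction are needed. Your approach, by contrast, reruns the proof of Theorem~\ref{lbtool} in the larger space, and the genuinely new ingredient is your split-point selection via the maximiser of $M_j=N_j-j/2$: this is a clean replacement for Lemma~\ref{slope} and, crucially, lets you force $s\le d-1$ so that the ``heavy'' column carrying the $k$ ones from $(x+x_i)^d$ lands in $E_1$ while $E_2$ is governed by the assumed independence of the $(x+y_i)^{e_i}$. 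What your route buys is a direct, self-contained picture of how the hypothesis $k\le(d+2)/4$ interacts with the interpolation matrix; what it costs is the bookkeeping (the disjointness reduction, the $M_j$ analysis, checking the size constraints for Proposition~\ref{split}) that the paper's three-line swapping trick avoids entirely.
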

Note that the exponents $e_i$ may be arbitrarily large.
\begin{proof}
Let $n$ be the largest exponent $e_i$ which occurs with a
coefficient $\beta_i \neq 0$. The case $n \leq d$ is covered by
Theorem~\ref{lb1}, so we assume here that $n>d$. In
equation~(\ref{easy3}), let us move all the $n$-th powers from the
right hand side to the left hand side, and the $k$ degree-$d$ terms
of $H_1$ from the left hand side to the right hand side. Applying
Theorem~\ref{lbtool} to this identity shows that $k+l>(n+2)/2$.
Hence $l > (n+2)/2 - k \geq k$.
\end{proof}

\begin{remark} \label{opt}
The lower bound for $H_2$ in Theorem~\ref{lb2} is essentially
optimal. More concretely, it is optimal up to one unit when $d$ is
even, and exactly optimal when $d$ is odd.

Note indeed that by a change of variable, representing $H_2$ is
equivalent to representing the polynomial
$H_3(x)=(x+1)^{d+1}-(x-1)^{d+1}$.
If we expand the  two binomials in $H_3$  the monomials of degree
$d+1-j$ wih even $j$ cancel,
and we obtain a sum of $\lceil \frac{d+1}{2} \rceil$ monomials.
See Proposition~\ref{dependenceincomplex} for a generalization of this
observation. In fact, with the same number of terms we can represent not only
$H_2$ but all polynomials of degree $d$: see Proposition~\ref{upper} below.

The consideration of $H_3$ also shows that Theorem~\ref{independence}
is %essentially optimal. 
optimal up to one unit when $d$ is
even, and exactly optimal when $d$ is odd.
Indeed, we have just observed that there is a
linear dependence between the $2+\lceil \frac{d+1}{2} \rceil$
polynomials $(x+1)^{d+1},(x-1)^{d+1},x^d,x^{d-2},x^{d-4},\ldots$.
%In the notations of Theorem~\ref{independence},

If $d$ is odd, the number of polynomials of degree less than $j$ in this sequence is $n_j=\lfloor j/2 \rfloor$ for $j \leq d+1$;
moreover, $n_{d+2}=2+(d+1)/2=(d+5)/2$.
Hence $n_j+n_{j+1}=j$ for $j \leq d$;
moreover, $n_{d+1}+n_{d+2}=d+3$.

If $d$ is even, the number of polynomials of degree less than $j$ in this sequence is $n_j=\lceil j/2 \rceil$ for $j \leq d+1$;
moreover, $n_{d+2}=2+(d+2)/2=(d+6)/2$.
Hence $n_j+n_{j+1}=j+1$ for $j \leq d$;
moreover,  $n_{d+1}+n_{d+2}=d+4$.
\end{remark}
A simple construction shows that all polynomials of degree $d$ can be
written as a linear combination of $\lceil(d+1)/2 \rceil$ powers.
\begin{proposition} \label{upper}
Every polynomial of degree $d$ can be
expressed as $\sum_{i = 1}^l \beta_i (x + y_i)^{e_i}$ with $l \leq
\lceil(d+1)/2 \rceil$.
\end{proposition}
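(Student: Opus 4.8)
The plan is to argue by induction on $d$, using the observation that a single power $\beta(x+y)^d$ with $\beta \neq 0$ carries exactly the two free real parameters needed to prescribe its two leading coefficients. Indeed, expanding $(x+y)^d = x^d + d\,y\,x^{d-1} + \cdots$, the coefficient of $x^d$ in $\beta(x+y)^d$ is $\beta$ and that of $x^{d-1}$ is $\beta\,d\,y$; so given any prescribed values $c_d \neq 0$ and $c_{d-1}$, one sets $\beta = c_d$ and $y = c_{d-1}/(d\,c_d)$ to match both. This is the engine of the induction: one term of the representation can be spent to kill two coefficients of $f$ at a time.

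For the inductive step (with $d \geq 1$), I would write $f = c_d x^d + c_{d-1} x^{d-1} + (\text{lower-order terms})$ with $c_d \neq 0$, choose $\beta_1$ and $y_1$ as above, and set $g = f - \beta_1(x+y_1)^d$, which then has degree at most $d-2$. If $g = 0$, a single term represents $f$. Otherwise the induction hypothesis gives a representation of $g$ with at most $\lceil (\deg g + 1)/2 \rceil \leq \lceil (d-1)/2 \rceil$ terms; adding back the term $\beta_1(x+y_1)^d$ then produces a representation of $f$ with at most $1 + \lceil (d-1)/2 \rceil = \lceil (d+1)/2 \rceil$ terms, which is the claimed bound. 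Note that the construction only ever produces exponents $e_i \leq d$, so it in fact proves a slightly stronger statement. The base case $d = 0$ is immediate: a nonzero constant $c_0$ equals $c_0\,(x+0)^0$, and $\lceil 1/2 \rceil = 1$.

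There is no substantial obstacle here; the only care required is in the degenerate situations — when the remainder $g$ vanishes identically, and in the low-degree base case — together with checking the elementary identity $1 + \lceil (d-1)/2 \rceil = \lceil (d+1)/2 \rceil$. It may also be worth remarking that this is precisely the mechanism underlying the observation in Remark~\ref{opt}: subtracting off a power so as to annihilate two coefficients at once is, in a loose sense, dual to the fact that $(x+1)^{d+1}-(x-1)^{d+1}$ collapses to only $\lceil (d+1)/2 \rceil$ monomials.
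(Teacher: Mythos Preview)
Your proof is correct and follows essentially the same route as the paper: induct on $d$, subtract a single power $a_d\bigl(x + a_{d-1}/(d\,a_d)\bigr)^d$ to drop the degree by at least two, and combine the inductive bound $\lceil (d-1)/2\rceil$ with the extra term via $1 + \lceil (d-1)/2\rceil = \lceil (d+1)/2\rceil$. The only cosmetic difference is that the paper takes $d=0,1$ as base cases whereas you absorb $d=1$ into the inductive step.
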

\begin{proof}
We use induction on $d$.
Since the result is obvious for  $d = 0,1$
we consider a polynomial $f = \sum_{i = 0}^d a_i x^i$ of degree $d
\geq 2$, and we assume that that the Proposition holds for polynomials
of degree $d-2$.
We observe that $g := f - a_d (x + (a_{d-1}/d a_d))^d$ has degree
$\leq d-2$. Applying the induction hypothesis to $g$ we get that $g
= \sum_{i = 1}^{l'} \beta_i (x + y_i)^{e_i}$, with $l' \leq \lceil(d-1)/2 \rceil$.
Hence, setting $l = l'+1$, $\beta_l = a_d$,  $y_l = a_{d-1}/(d a_d)$
and $e_l = d$, we conclude that $f = \sum_{i = 1}^l \beta_i (x +
y_i)^{e_i}$ and $l \leq 1 + \lceil(d-1)/2 \rceil = \lceil(d+1)/2 \rceil$.
\end{proof}
Theorem~\ref{lb2} therefore shows that of all polynomials of degree
$d$, $H_2$ is essentially (up to a small additive constant) the
hardest one.

\section{Changing Fields} \label{fields}

Some of the proof techniques used in this paper,
and even the results themselves,
are specific to the field of real numbers. This is due to the fact
that certain linear dependence relations which cannot occur over $\rr$
may occur if we change the base  field.
For instance, over a field of characteristic $p>0$ we have
$(X+1)^{p^k}-X^{p^k}-1=0$ for any $k$ (compare with
Theorem~\ref{independence} for the real case).
The remainder of this section is devoted to a discussion of the
complex case. We begin with an identity which generalizes the identity
$(x+1)^2-(x-1)^2-4x=0$.
\begin{proposition}\label{dependenceincomplex}
Take $k \in \mathbb{Z}^+$ and let $\xi$ be a $k$-th primitive root
of unity. Then, for all $d \in \mathbb{Z}^+$ and all $\mu \in
\mathbb{C}$ the following equality holds:
$$\sum_{j = 1}^{k} \xi^j (x + \xi^j \mu)^d = \sum_{i \equiv -1\, ({\rm mod}\ k) \atop 0 \leq i \leq d} k \binom{d}{i} \mu^i x^{d-i}.$$
\end{proposition}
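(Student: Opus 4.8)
The plan is to expand everything with the binomial theorem and then identify which terms survive after summing over the $k$-th roots of unity, using the standard orthogonality of characters. First I would expand the left-hand side: writing $(x+\xi^j\mu)^d = \sum_{i=0}^d \binom{d}{i}\xi^{ji}\mu^i x^{d-i}$, we get
$$\sum_{j=1}^k \xi^j (x+\xi^j\mu)^d = \sum_{i=0}^d \binom{d}{i}\mu^i x^{d-i} \sum_{j=1}^k \xi^{j(i+1)}.$$
So the whole proof reduces to evaluating the inner sum $\sum_{j=1}^k \xi^{j(i+1)}$ for each $i$.

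The key fact is the geometric-sum identity: for any integer $m$, $\sum_{j=1}^k \xi^{jm}$ equals $k$ if $k \mid m$ and $0$ otherwise. This holds because $\xi$ is a primitive $k$-th root of unity, so $\xi^m = 1$ exactly when $k \mid m$ (in which case every term is $1$), and otherwise $\xi^m$ is a root of $z^k - 1 = (z-1)(z^{k-1}+\cdots+1)$ different from $1$, hence $\sum_{j=0}^{k-1}\xi^{jm} = 0$, and reindexing $j \mapsto j+1$ leaves the sum unchanged. Applying this with $m = i+1$, the inner sum is $k$ precisely when $k \mid i+1$, i.e.\ $i \equiv -1 \pmod k$, and $0$ otherwise.

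Substituting back, only the indices $i$ with $0 \le i \le d$ and $i \equiv -1 \pmod k$ contribute, each with weight $k$, which gives exactly
$$\sum_{j=1}^k \xi^j(x+\xi^j\mu)^d = \sum_{i \equiv -1\,({\rm mod}\ k) \atop 0 \le i \le d} k\binom{d}{i}\mu^i x^{d-i},$$
as claimed. There is essentially no obstacle here: the computation is routine once one spots that the statement is just character orthogonality in disguise. The only thing to be slightly careful about is the indexing convention (the sum on the left runs over $j = 1,\dots,k$ rather than $0,\dots,k-1$), but since $\xi^{k(i+1)} = \xi^{0(i+1)} = 1$ this shift is harmless. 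One could also remark that the identity is a polynomial identity in $\mu$ (and in $x$), so it suffices to verify it coefficient by coefficient, which is exactly what the argument above does.
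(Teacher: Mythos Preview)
Your proof is correct and follows essentially the same approach as the paper: expand via the binomial theorem, interchange the order of summation, and evaluate the inner sum $\sum_{j=1}^k \xi^{j(i+1)}$ using the standard geometric-series/orthogonality argument for roots of unity. The paper phrases the vanishing case via the geometric series formula rather than via the factorization of $z^k-1$, but this is the same computation.
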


\begin{proof}
We observe that $$\sum_{j = 1}^{k} \xi^j (x + \xi^j \mu)^d = \sum_{i
= 0}^d \binom{d}{i} \mu^i x^{d-i} \left(\sum_{j = 1}^{k} \xi^{ji +
j}\right).$$ To deduce the result it suffices to prove that $\sum_{j
= 1}^{k} \xi^{ji + j}$ equals $k$ if $i \equiv -1 \ ({\rm mod} \
k)$, or $0$ otherwise. Whenever $i \equiv -1 \ ({\rm mod}\ k)$ we
have that $\xi^{ji + j} = (\xi^{i+1})^j = 1$ for all $j \in
\{1,\ldots,k\}$. For $i \not\equiv -1\ ({\rm mod}\ k)$, the
summation of the geometric series shows that
$$\displaystyle \sum_{j =1}^{k} \xi^{j(i + 1)} =
\xi^{i + 1}.\frac{\xi^{k(i + 1)-1}}{\xi^{i + 1}-1}=0.$$
%PK Moreover, if $i \not\equiv -1 \
%({\rm mod}\ k)$, then $\xi^{i+1}$ is a primitive $s$-th root of
%unity with $s := \frac{k}{\gcd(k,i+1)} \neq 1$. Hence, $\sum_{j =
%1}^{k} \xi^{ji + j} = \gcd(k,i+1) \sum_{j = 1}^{s} \xi^{j(i+1)} =
%0$.
\end{proof}
%For instance, with $k=d=2$ we recover the identity $(x+1)^2-(x-1)^2=4x$.
For any $d, k \in \mathbb{Z}^+$,
Proposition~\ref{dependenceincomplex} yields an identity of the form
\begin{equation} \label{complexid}
\sum_{j = 1}^{k} \alpha_j (x + x_j)^d = \sum_{j = 1}^l \beta_j
x^{e_j}
\end{equation} where the $x_j$
are distinct complex constants, the $\alpha_j, \beta_j$ are nonzero
complex numbers, $l=\big\lfloor \frac{d+1}{k}\big\rfloor$
and  $e_j < d$ for all $j$.
% and
%$k + l = n + \big\lfloor \frac{d+1}{n}\big\rfloor$.
Note the sharp contrast with theorems~\ref{lbtool} and~\ref{lb1}.
In particular, Theorem~\ref{lb1} gives an $\Omega(d)$ lower bound for
polynomials of the form $\sum_{j = 1}^{k} \alpha_i (x + x_i)^d$ over
the field of real numbers (the implied constant in the $\Omega$
notation is equal to 1/4).
But in~(\ref{complexid}) we have $k.l \leq d+1$, so $k \leq
\sqrt{d+1}$ or $l \leq \sqrt{d+1}$. We conclude that no better lower
bound than $\Omega(\sqrt{d})$ can possibly hold over $\mathbb{C}$ for the
same family of polynomials, at least for arbitrary distinct $x_i$'s
and arbitrary nonzero $\alpha_i$.
%A matching
Such a $\Omega(\sqrt{d})$
lower bound was recently obtained for the more general model of
sums of power of bounded degree polynomials: see Theorem~2 in~\cite{KKPS}.

We leave it as an open problem to close this quadratic gap between
lower bounds over $\rr$ and $\mathbb{C}$:
find an explicit polynomial $f \in \mathbb{C}[X]$ of degree $d$
which requires at
least $k=\Omega(d)$ terms to be represented under the form
$$f(x)=\sum_{i=1}^k \alpha_i (x+x_i)^{e_i}.$$
With the additional requirements $e_i \leq d$ for all $i$, the ``target polynomial'' $H_2(x)=(x+1)^{d+1}-x^{d+1}$ from
Theorem~\ref{lb2} looks like a plausible candidate.

{\small

\section*{Acknowledgments}

P.K. acknowledges useful discussions with member of the Aric research group in the initial stages of this work.

%\bibliographystyle{plain}

%\bibliography{../biblio}}

\appendix

\section*{Appendix: a worked out example}

We illustrate the proof method of Theorem~\ref{lbtool} (more than
the result itself) on a small example: we show with this method
that there is no
identity of the form
\begin{equation} \label{example}
\alpha_1x^5+\alpha_2(x+1)^5+\alpha_3(x+3)^5=\beta_1 x^2 + \beta_2
(x+1)+\beta_3(x+3)^2
\end{equation}
except if the coefficients $\alpha_i,\beta_i \in \rr$ are all 0.
The corresponding interpolation problem is:
\begin{equation} \label{interpol}
g(0)=0,\ g(1)=0,\ g(3)=0,\ g^{(3)}(0)=0,\ g^{(4)}(1)=0,\ g^{(3)}(3)=0
\end{equation}
where $g \in {\rr}_5[X]$.
The set of knots is $X=\{x_1,x_2,x_3\}=\{0,1,3\}$ and the
interpolation matrix is
$$E=\left(\begin{array}{cccccc}
1 & 0 & 0 & 1 & 0 & 0\\
1 & 0 & 0 & 0 & 1 & 0\\
1 & 0 & 0 & 1 & 0 & 0
\end{array}\right).$$
This matrix is not order regular.
Indeed, the pair $(E,Y)$ where $Y=\{-1,0,1\}$ is not regular.
This follows from the identity $$(x+1)^2-(x-1)^2-4x=0$$ which was pointed
out earlier in the paper. We will nonetheless show that the pair
$(E,X)$ is regular. Toward this, let us split $E$ in the middle to obtain
the two matrices
$$E_1=\left(\begin{array}{ccc}
1 & 0 & 0 \\
1 & 0 & 0 \\
1 & 0 & 0
\end{array}\right)$$
and
$$E_2=\left(\begin{array}{ccc}
1 & 0 & 0 \\
0 & 1 & 0 \\
1 & 0 & 0
\end{array}\right).$$
The first matrix is regular since all its 1's are in the first column.
The second matrix fails to be order regular for the same reason that
$E$ does, but it is easy to check that the
interpolation
problem $h(0)=0,\ h'(1)=0,\ h(3)=0$ has no nontrivial solution
in ${\rr}_2[X]$. Hence the pair $(E_2,X)$ is regular.
It follows from Proposition~\ref{split} that $(E,X)$ is a regular
pair, and the  6 polynomials in~(\ref{example}) are indeed linearly independent.

We conclude with a remark about Proposition~\ref{split}.
In the proof of this result, we used the fact that
the matrix $A(E,X)$ of the linear system can be transformed by a
permutation of rows into a matrix of the form
$$\left(\begin{array}{cc}
A(E_1,X) & * \\
0 & A(E_2,X)
\end{array}\right).$$
We point out that when the 6 interpolation constraints are
listed in the same order as in~(\ref{interpol}), $A(E,X)$ is already
in this form.
In particular, the equations for the last 3 constraints are:
$$g_3+x_1g_4+x_1^2g_5/2=0,$$
$$g_4+x_2g_5=0,$$
$$g_3+x_3g_4+x_3^2g_5/2=0$$
and the matrix of this subsystem is indeed $A(E_2,X)$.
As to $A(E_1,X)$, consider for instance the third constraint $g(x_3)=0$.
The corresponding equation is $\sum_{j=0}^5 x_3^jg_j/j! = 0$.
The first 3 coefficients are $1,x_3,x_3^2/2$ and this is the last row
of $A(E_1,X)$.
\end{document}